\newtheorem*{theorem*}{Theorem}
\newtheorem*{corollary*}{Corollary}
\newtheorem*{proposition*}{Proposition}
\newtheorem*{lemma*}{Lemma}
\newtheorem*{fact*}{Fact}
\newtheorem*{definition*}{Definition}
\newtheorem*{conjecture*}{Conjecture}
\newtheorem{theorem}{Theorem}
\newtheorem{proposition}{Proposition}
\newtheorem{assumption}{Assumption}
\newtheorem{definition}{Definition}
\newtheorem{lemma}{Lemma}
\newtheorem{remark}{Remark}
\newtheorem{fact}{Fact}
\DeclarePairedDelimiterX{\inp}[2]{\langle}{\rangle}{#1, #2}
\newcommand{\overbar}[1]{\mkern 1.5mu\overline{\mkern-1.5mu#1\mkern-1.5mu}\mkern 1.5mu}
\titleformat{\subsubsection}[runin]
        {\normalfont\bfseries}
        {\thesubsubsection}
        {0.5em}
        {}
        [.]
\title{Equitable screening\thanks{I am grateful to Piotr Dworczak, Joey Feffer, Ben Golub, Ravi Jagadeesan, Annie Liang, Federico Llarena, Paula Onuchic, Sam Wycherley, and Frank Yang for their helpful comments.}}
\author{Filip Tokarski \\ Stanford GSB}
\begin{document}
\maketitle
\date{\today} 
\vspace*{-1cm} 

\begin{abstract}

A designer distributes goods while considering the perceived equity of the resulting allocation. Such concerns are modeled through an equity constraint requiring that equally deserving agents receive equal allocations. I ask what forms of screening are compatible with equity and show that while the designer cannot equitably screen with a single instrument (e.g., payments or ordeals), combining multiple instruments, which on their own favor different groups, allows her to screen while still producing an equitable allocation.
        \end{abstract}

\section{Introduction}

In 2017, the French president Emmanuel Macron unveiled a set of environmental policies aimed at reducing carbon emissions. The plan involved a gradual increase in fuel taxes, including a significant hike in diesel and petrol prices the following year. However, the proposed tax increase sparked protests and riots, which later expanded into the “Yellow Vest” movement. The protesters, largely from rural and less affluent areas, objected to the policy on equity grounds. They claimed that households in poor financial standing were disproportionately bearing the cost of decarbonization, and that the tax unfairly burdened those who could not cut down on driving or switch to greener alternatives.\footnote{\url{https://www.nytimes.com/2018/12/06/world/europe/france-fuel-carbon-tax.html}} Ultimately, mounting unrest led the French government to suspend the planned fuel tax hike. Similarly motivated opposition derailed green reforms in \href{https://news.climate.columbia.edu/2021/07/30/co\%E2\%82\%82-reduction-law-rejected-in-swiss-referendum}{Switzerland}, \href{https://grist.org/article/the-rural-anxiety-behind-oregons-failed-climate-legislation/}{Oregon}, and \href{https://taxpolicycenter.org/taxvox/why-unusual-coalition-defeated-washington-states-carbon-tax-initiative}{Washington state}.

This paper asks how such perceived equity concerns restrict policies for allocating goods like medical treatments, as well as schemes levying charges for emissions or entering congested city centers. I model equity concerns using a \emph{merit function} specifying societal perceptions of how entitled each agent is to the allocated good. For instance, in the case of emissions, rural households reliant on cars for transportation would be more deserving of a right to emit than urban ones. I then introduce an equity constraint requiring that agents with equal merit receive the same allocation. 

Importantly, the policy designer cannot observe or condition on all the factors that determine a particular agent's merit. In the case of the fuel tax, for instance, the government does not observe one's commute length and frequency and would find it difficult to condition the tax or rebate on one's access to public transport. Nevertheless, the designer is still judged on the fairness of the aggregate allocation: despite each individual's needs and characteristics being difficult to observe, it might still be apparent that certain groups, e.g., poor rural households, are treated unfairly. Thus, to satisfy the equity constraint, the designer will have to anticipate and correctly account for the choices different types of agents make in the mechanism.\footnote{\label{footnote:2} Even if the designer observes some of the applicants' characteristics, such as their income bracket, the problem of selection on unobservables persists within each observable group. Moreover, heterogeneity in characteristics relevant for merit remains substantial even after conditioning on these variables. For instance, current income, a metric on which eligibility is frequently based, is far less predictive of future income, race, and education for the lowest earners than it is for the population at large \citep{cook2023build}.}

The equity constraint will limit the ways in which the designer can screen agents. This is important, as screening is a crucial feature of many public policy designs. For instance, in the carbon tax case, raising the price of driving is meant to discourage (screen out) those with relatively low need for it to reduce emissions. A similar effect can be accomplished by screening agents with non-monetary instruments, such as requiring them to wait in line or fill out forms. Indeed, if participants need to go through such ordeals to get the benefit, only those who need it will do so \citep{zeckhauser}. One might worry, however, that screening could produce inequitable allocations, as people with the same need for the good may find completing ordeals or paying burdensome to different extents. I therefore ask how (if at all) the designer can screen agents when allocations are subject to equity constraints. I look at screening using only payments (which are less costly to the rich), only ordeals (less costly to the poor) and both of these instruments at once. In the former two cases equitable screening is impossible. However, when the designer uses both payments and ordeals, she has significant freedom to screen despite equity constraints. Such mechanisms offer recipients a menu of “payment options”---to get the good, participants can pay a fee, complete an ordeal, or do a combination of the two. 

My results therefore suggest that the equity of payment-based mechanisms, such as congestion or emission charges, could be improved by e.g. pairing them with a (possibly arduous) rebate procedure. In such cases, poorer and more car-reliant individuals could avoid burdensome charges by submitting an exemption application. Such policies are widely used in practice. For instance, Illinois Tollway offers drivers a menu of payments and ordeals: they can pay standard cash tolls or open an account to pay discounted tolls. Households earning below 250\% of the federal poverty line can apply for further reductions: by documenting low income or participation in programs such as SNAP, they can enroll in a special program which waives a deposit and lowers the required amount of prepaid tolls. Such alternative pathways are meant to ``mitigate the disparate impact that transportation costs have on low-income households."\footnote{\url{https://agency.illinoistollway.com/assist}} Menus of payments and ordeals are also common in healthcare. For example, in Australia, patients with non-life-threatening emergencies can choose between waiting at a free public hospital emergency department and paying a fee at a walk-in urgent care clinic to be seen rapidly.\footnote{\url{www.emergencyurgentcare.com.au/services/emergency-clinic/}}

This paper relates to work examining how moral sentiments constrain market designers and policymakers. \cite{repugnance} discusses how the repugnance of certain transactions precludes the use of markets in settings where they would be efficient. The literature following \cite{justenvy} models fairness concerns in matching markets through assigning \emph{priorities} to agents. It then studies matching mechanisms that eliminate \emph{justified envy}---a notion capturing perceived injustice. Finally, \cite{frankel2023testoptional} argue that many US colleges switched to test-optional admissions to reduce public scrutiny of their admission decisions. However, no existing work studies equity constraints in mechanism design problems of the sort this paper considers. In doing so, my paper also relates to the literatures on algorithmic fairness in computer science and on discrimination in economics, which attempt to conceptualize bias, unfairness and discrimination (see \cite{alves2023survey} and \cite{paula} for respective surveys, and Subsection \ref{ref:eqconstsec} for a discussion of related fairness notions). However, both of these literatures focus on problems of classification or statistical inference and typically do not account for the strategic behavior of agents. By contrast, the purpose of this paper is to study mechanisms that are fair after accounting for strategic responses. 

My work is also close to the literature on inequality-aware market design \citep{condorelli, dworczak2021redistribution, akbarpour2020redistributive, vaccines, pai2022taxing}. Following \cite{weitzman}, this line of research recognizes that, in certain settings, willingness-to-pay may not accurately reflect need as people differ in their value for money. It then asks how to design welfare-maximizing mechanisms in such cases. While my paper also features a wedge between need and willingness-to-pay, its focus is different: instead of asking about the welfare-maximizing mechanism, I ask how public perceptions of fairness constrain the designer. The difference between these questions is apparent when considering, for example, the problem of allocating COVID-19 vaccines studied by \cite{vaccines}; the paper shows that the optimal policy combines priorities to vulnerable groups with a market mechanism under which one can pay to be vaccinated early. While optimal from a welfare standpoint, the authors themselves acknowledge that their proposed policy might provoke backlash on fairness grounds. Indeed, public perceptions of fairness need not be aligned with welfare-maximization, and thus can impose additional constraints on the policymaker.

The rest of the paper is structured as follows. I first introduce the model of good provision with an equity constraint. Section \ref{sec:imp} then discusses the forms of screening that are feasible in this environment. In Section \ref{sec:obswealth} I ask what allocation rules can be equitably implemented if the designer also observes agents' wealth. Section \ref{sec:relax} discusses measures of violation of the equity constraint while Section \ref{sec:non-linear} considers a richer specification of agents' utility functions; Section \ref{sec:conc} concludes.

\section{Model}

The designer allocates goods $x\in [0,1]$ to agents with types $(\alpha,\beta)\in \Theta \subset \mathbb{R}\times \mathbb{R}_+$. $\Theta$ is open, convex and bounded. $\beta$ represents an agent's value for the good and $\alpha$ represents her value for money (higher $\alpha$ means the agent is poorer). I consider two screening instruments---payments and ordeals. Payments, $p \in \mathbb{R}$, are more burdensome for poorer agents (higher $\alpha$), while ordeals, $q\in \mathbb{R}_+$, are costlier for richer agents. Utility is given by:
\[
U[\alpha,\beta;x,p,q]= \beta x- w(\alpha) p - z(\alpha) q,
\]
where $w,z:\mathbb{R}\to\mathbb{R}$ are strictly positive and twice continuously differentiable, with 
$w'(\alpha)>0,$ $z'(\alpha)<0$.\footnote{
The allocation of the good $x$ may be interpreted as a probability of getting it. While the linearity of the utility function in payments and ordeals may be a good approximation for cases where these are small, I later consider a more general utility specification in Section \ref{sec:non-linear}.} I assume that need for money $\alpha$ and need for the good $\beta$ are private information (I relax this assumption in Section \ref{sec:obswealth}). The designer therefore chooses an allocation rule $x:\Theta \to [0,1]$, payment rule $p:\Theta \to \mathbb{R}$ and ordeal rule $q:\Theta \to \mathbb{R}_+$ subject to \eqref{eq:IC} and \eqref{eq:IR} constraints:
        \begin{equation}\label{eq:IC}
                \text{for all }  (\alpha,\beta) \in \Theta, \quad U[\alpha,\beta;(x,p,q)(\alpha,\beta)] \geq \sup_{(\alpha',\beta')\in \Theta} U[\alpha,\beta;(x,p,q)(\alpha',\beta')],\tag{IC}
        \end{equation}
        \begin{equation}\label{eq:IR}
                \text{for all }  (\alpha,\beta) \in \Theta, \quad U[\alpha,\beta;(x,p,q)(\alpha,\beta)] \geq 0.\tag{IR}
                \end{equation}
An allocation rule $x^*(\alpha,\beta)$ is \emph{implementable} if there exist payment and ordeal rules $p^*(\alpha,\beta)$ and $q^*(\alpha,\beta)$ such that $(x^*,p^*,q^*)$ satisfies \eqref{eq:IC} and \eqref{eq:IR}. 

\begin{remark}
My results require ordeals to be relatively less costly to the poor than to the wealthy. Indeed, as I discuss later, Theorem \ref{th:1} asserting that combining payments and ordeals allows for rich equitable screening would fail if the wealthy found both payments and ordeals less costly. Nevertheless, \cite{dupas} provide evidence supporting my assumption for some ordeals.\footnote{ The authors compare mechanisms for allocating a benefit that screen agents with prices and by requiring them to travel to a distant office. The selection patterns they observe “provide evidence that time and money costs have different selection properties, (...) consistent with a model in which richer households have a higher value of time.”} There are, however, forms of non-monetary screening for which this assumption is implausible. For instance, \cite{Bettinger}, who study college financial aid applications, provide evidence that highly complex bureaucratic procedures disproportionately discourage low-income applicants.\footnote{The authors note that ``To determine eligibility, students and their families must fill out an eight-page, detailed application called the Free Application for Federal Student Aid (FAFSA), which
has over 100 questions."}
\end{remark}

\subsection{Equity constraint}\label{ref:eqconstsec}

The designer also faces an \emph{equity constraint}, which I model using an exogenous \emph{merit function} $\eta: \mathbb{R}^2 \to \mathbb{R}$ specifying how entitled each agent is to the good. I assume that $\eta(\alpha,\beta)$ is twice continuously differentiable with $\eta_\alpha,\eta_\beta$ strictly positive and uniformly bounded away from zero. That is, agents are more entitled to the good if they value it more or if they are poorer (as richer agents can more easily satisfy their needs without government assistance). Since $\Theta$ is open, bounded and connected, and the merit function is strictly increasing and continuous, the set of values attained by $\eta(\alpha,\beta)$ on $\Theta$ is an open interval, which I denote by $(\underline \eta,\overbar \eta)$.

The equity constraint requires that all agents with the same merit receive equal amounts of the good:
\begin{equation}\label{const:equity}
\eta(\alpha^a,\beta^a) = \eta(\alpha^b,\beta^b) \quad \Longrightarrow \quad x(\alpha^a,\beta^a) = x(\alpha^b,\beta^b).
\tag{E}
\end{equation}
I call an allocation rule satisfying \eqref{const:equity} \emph{equitable}. Note that an allocation rule $x(\alpha,\beta)$ is equitable if and only if it can be written in the form $x(\alpha,\beta)\equiv \hat x(\eta(\alpha,\beta))$ for some $\hat x:(\underline \eta,\overbar \eta)\to [0,1]$.

While the equity constraint does not require that agents with higher merit receive more of the good, this will still be the case for any implementable equitable allocation:
\begin{lemma}\label{lem:inc}
If the allocation rule $x(\alpha,\beta)\equiv \hat x(\eta(\alpha,\beta))$ is implementable, then $\hat x$ is weakly increasing.
\end{lemma}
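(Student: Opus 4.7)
The plan is to first reduce \eqref{eq:IC} to a one-dimensional monotonicity statement in $\beta$ along each horizontal slice of $\Theta$, then lift this to a local monotonicity statement for $\hat{x}$ on the $\eta$-axis, and finally globalize via a compactness argument.

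For the first step, I would fix any $\alpha_0$ and any two types $(\alpha_0,\beta^a),(\alpha_0,\beta^b)\in\Theta$. Writing out the two \eqref{eq:IC} inequalities and adding them, the $w(\alpha_0)$ and $z(\alpha_0)$ coefficients are identical across the two types, so the payment and ordeal contributions cancel exactly, leaving
\[ (\beta^b-\beta^a)\bigl(x(\alpha_0,\beta^b)-x(\alpha_0,\beta^a)\bigr) \geq 0. \]
Thus $x(\alpha_0,\cdot)$ is weakly increasing on the open interval $B_{\alpha_0}:=\{\beta:(\alpha_0,\beta)\in\Theta\}$, which is nonempty and open because $\Theta$ is open and convex. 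Since $\eta_\beta>0$, the map $\beta\mapsto\eta(\alpha_0,\beta)$ is a strictly increasing homeomorphism from $B_{\alpha_0}$ onto an open subinterval $R_{\alpha_0}\subset(\underline{\eta},\overbar{\eta})$, and combined with \eqref{const:equity} this yields that $\hat{x}$ is weakly increasing on $R_{\alpha_0}$.

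Varying $\alpha_0$, the family $\{R_\alpha\}$ is an open cover of $(\underline{\eta},\overbar{\eta})$: every $\eta_0$ in that interval equals $\eta(\alpha_0,\beta_0)$ for some $(\alpha_0,\beta_0)\in\Theta$, and therefore lies in $R_{\alpha_0}$. To finish, I would take any $\eta^1<\eta^2$ in $(\underline{\eta},\overbar{\eta})$ and apply a standard chaining argument: by the Lebesgue number lemma applied to a finite subcover of the compact interval $[\eta^1,\eta^2]$, I can partition it into $\eta^1=t_0<t_1<\cdots<t_n=\eta^2$ with each consecutive pair $\{t_{j-1},t_j\}$ contained in a common $R_{\alpha_j}$; monotonicity of $\hat{x}$ on each $R_{\alpha_j}$ then telescopes to $\hat{x}(\eta^1)\leq\hat{x}(\eta^2)$.

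The main obstacle is this final globalization step, since a single slice $R_\alpha$ need not intersect the fibers of two arbitrary $\eta$-values---a thin diagonally oriented type space with, say, $\eta=\alpha+\beta$ already shows this---so monotonicity across $\eta$-values has to be pieced together from overlapping slices rather than observed directly. Openness of $\Theta$ and the uniform strict positivity of $\eta_\beta$ are exactly what make the covering argument succeed.
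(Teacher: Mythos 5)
Your proof is correct and follows essentially the same route as the paper's: restrict to a fixed-$\alpha$ slice, observe that the induced one-dimensional quasi-linear screening problem forces $x(\alpha,\cdot)$ to be weakly increasing in $\beta$, and then transfer this to $\hat x$ via the strict monotonicity of $\eta(\alpha,\cdot)$. The one place you go beyond the paper is the final globalization: the paper simply asserts that slice-wise monotonicity implies $\hat x$ is increasing, whereas you correctly note that a single slice need not contain preimages of two arbitrary merit levels and supply the covering/chaining argument that makes that last step rigorous.
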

\begin{proof}
Any implementable $x(\alpha,\beta)$ has to be implementable for the subset of agents with need for money $\alpha=\alpha^a$, for any $\alpha^a$. We can write the utility of such agents as $\beta \cdot x(\alpha^a,\beta) - t(\beta)$, where $t(\beta):= w(\alpha^a) \cdot p(\alpha^a,\beta) +z(\alpha^a) \cdot q(\alpha^a,\beta)$. This is a one-dimensional quasi-linear screening problem so any implementable allocation has to be weakly increasing in the value for the good $\beta$. Hence, $x(\alpha,\beta)$ must be weakly increasing in $\beta$ for every $\alpha$. Since the allocation rule takes the form $\hat x(\eta (\alpha,\beta))$ and the merit function $\eta(\alpha,\beta)$ is strictly increasing in $\beta$, it follows that $\hat x$ has to be increasing.
\end{proof}

\begin{remark}
A mechanism with a constant allocation rule $x(\alpha,\beta)$ is always equitable. If the allocation $x$ is interpreted as a probability of getting the good, this mechanism represents a lottery where all participants have equal chances of receiving it. Such lotteries have been used i.a. for military drafts and school choice because of their perceived fairness \citep{peyton}. 
\end{remark}

It is worth discussing the modeling choices behind my formulation of the equity constraint. First, it is imposed on agents' allocations of the good rather than their utilities. If the constraint were imposed on utilities instead, even minimal heterogeneity in preferences would preclude the existence of equitable and implementable mechanisms allocating the good. This is because agents who value the good more or who are less burdened by payments or ordeals would receive information rents from any such mechanism, meaning that utilities could only be equalized when nothing was allocated. Second, the constraint regulates the allocations of the good, but not the assignments of payments and ordeals. The reason for this is twofold: first, requiring the equality of all three could only produce trivial results. Moreover, who gets access to the good may be more salient to the public than how they ``pay'' for it. For instance, poor people not being able to afford basic medical care is likely to attract more scrutiny than the fact that some pay to get it while others have to wait in line.

My equity constraint is also closely related to the algorithmic fairness literature, which attempts to conceptualize bias and discrimination in the problem of classifying agents based on observed characteristics (see \cite{barocas-hardt-narayanan} for an introduction).

A notion related to mine was proposed by \cite{dwork2012fairness}, who formalize the principle that similar individuals should be treated similarly. They consider the problem of assigning distributions over outcomes to individuals with different characteristics and model fairness through a “Lipschitz condition” that limits how quickly the assigned distribution of outcomes can change as the distance between characteristics grows. My notion, by contrast, concerns only the \emph{equal} treatment of \emph{equally} deserving agents.

My approach is also close to the family of fairness criteria known as \emph{conditional independence}, requiring that some aspect of classification be conditionally independent of \emph{protected characteristics}. If we treat wealth $\alpha$ as a protected characteristic, my equity constraint resembles a criterion from the independence family known as \emph{conditional statistical parity}, which requires that, conditional on certain covariates deemed legitimate, the classification rate be the same across protected groups \citep{Corbett-Davies}. However, my constraint differs from conditional statistical parity in two key respects. First, it does not condition directly on agents' characteristics but on merit. Second, in my model, conditioning on wealth and merit fully pins down agents' types and thus independence reduces to equality of allocations. Requiring allocations to be independent conditional on merit would be a natural extension if the model was enriched with characteristics irrelevant to merit. After such an extension, my approach would nest many notions from the independence family: conditional statistical parity could be recovered by assigning distinct values to each legitimate covariate combination; group fairness, requiring that the allocation not depend on protected characteristics, could be recovered by making the merit function constant.

Finally, \cite{hardt2016equality} propose a related notion of \emph{equal opportunity}, which demands that odds of receiving the good do not depend on group membership \emph{among deserving individuals}. Like in my paper, this notion of fairness puts the burden of discovering whether a specific individual is deserving on the designer. Since \cite{hardt2016equality} focus on a classification problem, that burden is one of data collection. In my setting, by contrast, such information needs to be elicited through designing an incentive-compatible mechanism. 

\section{Implementable equitable allocations}\label{sec:imp}

I now ask what forms of screening are compatible with equity. I study the sets of implementable
equitable allocations in three cases: when the designer uses only payments to screen, when she
uses only ordeals, and when she uses both screening instruments.

\subsection{Screening with payments}\label{screeningpayments}
I first show that the designer cannot equitably screen using only payments.
\begin{proposition} \label{th:1}
Suppose we do not use ordeals to screen, so $q \equiv 0$. Then every equitable and implementable allocation rule $x(\alpha,\beta)$ is constant.
\end{proposition}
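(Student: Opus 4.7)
My plan is to first show that equity together with \eqref{eq:IC} forces the payment rule to also depend only on merit, then exploit a transversality between merit level sets and level sets of the ratio $\beta/w(\alpha)$ to conclude that $\hat x$ must be locally constant, hence constant. Specifically, for two types with identical merit, equity forces the same $x$, and then, since $q\equiv 0$, the pair of \eqref{eq:IC} inequalities between them collapses to $p(\alpha_a,\beta_a)=p(\alpha_b,\beta_b)$. Hence $p(\alpha,\beta)\equiv \hat p(\eta(\alpha,\beta))$ for some $\hat p$. Rewriting utility as $w(\alpha)\bigl[k\hat x(\eta')-\hat p(\eta')\bigr]$ with $k:=\beta/w(\alpha)$ makes clear that an agent's optimal merit report depends on her type only through the scalar $k$.

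The next step is a geometric observation. Along the level set $L_k=\{(\alpha,\beta):\beta=kw(\alpha)\}$ the merit $\eta$ is strictly increasing in $\alpha$, since its $\alpha$-derivative equals $\eta_\alpha+kw'(\alpha)\eta_\beta>0$. Conversely, along any merit curve $\eta=c$ the ratio $\beta/w(\alpha)$ is strictly decreasing in $\alpha$ (because $\partial_\alpha\beta=-\eta_\alpha/\eta_\beta<0$ and $w,w'>0$). Hence the two families of curves cross transversally at every interior point, and an open segment of $L_k$ through any point of $\Theta$ maps under $\eta$ onto an open interval of merit values.

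To close the argument I localize. Fix an anchor $(\alpha_0,\beta_0)\in\Theta$ with $k_0:=\beta_0/w(\alpha_0)$ and $\eta_0:=\eta(\alpha_0,\beta_0)$, and let $J\ni\eta_0$ be the image under $\eta$ of an open segment of $L_{k_0}$ around the anchor. Applying \eqref{eq:IC} between pairs of points on that segment (which share $k_0$ but have different merits in $J$) forces $\hat p(\eta)-\hat p(\eta_0)=k_0\bigl(\hat x(\eta)-\hat x(\eta_0)\bigr)$ throughout $J$. If $\hat x$ were non-constant on $J$, Lemma \ref{lem:inc} supplies $\eta_1<\eta_2$ in $J$ with $\hat x(\eta_1)<\hat x(\eta_2)$; substituting the payment formula into the \eqref{eq:IC} inequality for an arbitrary type on the merit curve $\eta_1$ reporting $\eta_2$ reduces to $\beta/w(\alpha)\le k_0$. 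By transversality the merit curve $\eta_1$ meets $L_{k_0}$ at a point in $\Theta$, and because $\beta/w(\alpha)$ strictly varies along that curve and $\Theta$ is open, arbitrarily close to this intersection there are points of the curve still in $\Theta$ with $\beta/w(\alpha)>k_0$---a contradiction. Hence $\hat x$ is constant on a neighborhood of every $\eta\in(\underline\eta,\overbar\eta)$, and therefore constant.

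The main obstacle will be executing this last step cleanly---using openness of $\Theta$ together with the transversality to guarantee that the contradicting nearby type actually lies in $\Theta$, rather than merely on the curve. Everything before that is essentially bookkeeping once the reduction to $p=\hat p\circ\eta$ is in hand.
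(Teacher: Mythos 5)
Your proof is correct, but it takes a genuinely different route from the paper's. The paper also begins by observing that equity plus \eqref{eq:IC} forces $p\equiv\hat p(\eta(\alpha,\beta))$, but then finishes with a two-sided monotonicity squeeze: fixing $\beta$ and screening over $\alpha$ shows $p$ must be weakly decreasing in $\alpha$, hence $\hat p$ weakly decreasing in merit; while the no-free-lunch deviation (weakly more $x$ for strictly less money) shows $\hat p$ must be weakly increasing in merit; so $\hat p$ is constant and can only support a constant $\hat x$. You instead slice the type space by the willingness-to-pay ratio $k=\beta/w(\alpha)$, derive the local linear relation $\hat p(\eta)-\hat p(\eta_0)=k_0(\hat x(\eta)-\hat x(\eta_0))$ along an iso-$k$ segment, and then use transversality of iso-$k$ and iso-merit curves together with openness of $\Theta$ to produce a profitable deviation whenever $\hat x$ is non-constant. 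Both arguments exploit the same underlying tension (merit increases in $(\alpha,\beta)$ jointly while willingness to pay $\beta/w(\alpha)$ moves against $\alpha$ along iso-merit curves, the content of Figure \ref{fig:11}), but the paper's version is shorter and avoids any local analysis, whereas yours makes the geometric conflict explicit and identifies precisely which types want to deviate; the one step you flagged as delicate (that the contradicting nearby type lies in $\Theta$) does go through exactly as you describe, since $\eta_\beta>0$ lets you write the iso-merit curve locally as a graph $\beta(\alpha)$ with $\beta'(\alpha)=-\eta_\alpha/\eta_\beta<0$, along which $\beta/w(\alpha)$ is strictly decreasing, and openness of $\Theta$ keeps a small piece of that graph inside the domain.
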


\begin{proof}
By Lemma \ref{lem:inc}, any implementable and equitable allocation rule $x(\alpha,\beta)$ can be written as $\hat x(\eta(\alpha, \beta))$, where $\hat x$ is weakly increasing. Note also that the payment rule has to take the form $p(\alpha,\beta)\equiv \hat p(\eta(\alpha,\beta))$ because identical allocations of $x$ must require identical payments. An argument analogous to the proof of Lemma \ref{lem:inc} tells us that any implementable payment rule $p(\alpha,\beta)$ has to be weakly decreasing in $\alpha$. Since the merit function $\eta(\alpha,\beta)$ is strictly increasing in need for money $\alpha$, it follows that $\hat p$ must be weakly decreasing. However, $\hat p$ also has to be weakly increasing---otherwise one could deviate and receive a weakly greater allocation of $x$ for a strictly smaller payment. Therefore, $\hat p$ has to be constant. Such a payment rule can only support a constant allocation.
        \end{proof}
Intuitively, equity requires that poorer and richer agents of equal merit receive the same allocation, even though the richer agents have higher need for the good. However, these richer agents with higher need for the good have greater willingness to pay for it (Figure \ref{fig:11}). Therefore, any mechanism that sells the good will allocate more of it to the richer agents, and hence violate equity.

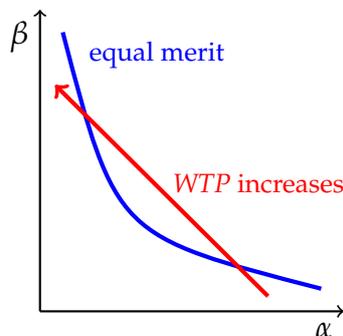
\begin{figure}[!htb]
        \centering
        \captionsetup{width=.64\linewidth}
        \begin{tikzpicture}[scale=1] 
            \draw[->,thick] (0,0) -- (4,0);
            \draw[->,thick] (0,0) -- (0,4);
            
            \node[below left] at (4,0) {$\alpha$};
            \node[below left] at (0,4) {$\beta$};
        
            \node[blue, right, font=\footnotesize] at (0.5,3.4) {equal merit};
    
            \draw[blue, ultra thick] (0.3,3.7) .. controls (1,1) .. (3.7,0.3);
            
            \draw[->, red, ultra thick] (3,0.2) -- (0.2,3);
            \node[red, right, font=\footnotesize] at (1.6,1.7) {\emph{WTP} increases};        
        \end{tikzpicture}
        \vspace{-13pt} 
        \caption{While merit increases in the north-east direction, willingness to pay increases in the north-west direction.}
        \label{fig:11}
    \end{figure}   

\subsection{Screening with ordeals}\label{screenqueueing}

I now show that the designer cannot screen equitably using only ordeals unless certain knife-edge conditions hold. Intuitively, screening with ordeals differs from screening with payments in that the allocation it produces is biased \emph{towards the poor}. That is, between two people with the same value for the good $\beta$, the poorer one will be more eager to wait to get the good, and hence will receive a higher allocation. Unlike in the case of screening with payments, this direction of bias is consistent with what the merit function requires. Notice, however, that equity constraints impose requirements not only on the \emph{direction} of the allocation's bias, but also on its exact form---the merit function $\eta(\alpha,\beta)$ specifies exactly the sets of types that have to be treated identically. However, with one screening device only, the designer will generically have “too few degrees of freedom” to pool agents in this exact way. Proposition \ref{th:2} formalizes this observation. 

\begin{proposition}\label{th:2} Suppose we do not use payments to screen, so $p\equiv 0$. Then, if a non-constant allocation rule $x(\alpha,\beta)$ is equitable and implementable, there has to exist some $\eta^* \in (\underline \eta,\overbar \eta)$ and $k \in \mathbb{R}$ such that for all types $(\alpha,\beta)$ with merit $\eta(\alpha,\beta)=\eta^*$ we have:
                \[
                \frac{\beta}{z(\alpha)} = k.
                \]
\end{proposition}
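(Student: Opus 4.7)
The plan is to reduce \eqref{eq:IC} to a one-dimensional screening problem over merit levels, and then use continuity of $\beta/z(\alpha)$ on $\Theta$ to pin this ratio down along a specific iso-merit curve.

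By Lemma \ref{lem:inc}, any implementable equitable allocation has the form $x(\alpha,\beta)=\hat x(\eta(\alpha,\beta))$ with $\hat x$ weakly increasing. The first step is to show that the ordeal rule factors analogously as $q(\alpha,\beta)=\hat q(\eta(\alpha,\beta))$, with $\hat q$ weakly increasing. Equal merit forces equal allocation, so if ordeals differed the type assigned the higher $q$ would deviate to the other's report (getting the same $x$ at strictly lower cost), violating \eqref{eq:IC}; monotonicity of $\hat q$ then follows from a Lemma \ref{lem:inc}-style single-crossing argument, this time fixing $\beta$ and varying $\alpha$ (using that $z$ is strictly decreasing). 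After this reduction, \eqref{eq:IC} is equivalent to requiring that, for each $(\alpha,\beta)\in\Theta$,
\[
\eta(\alpha,\beta)\in \arg\max_{\eta'}\bigl[\lambda\,\hat x(\eta')-\hat q(\eta')\bigr], \qquad \lambda:=\beta/z(\alpha),
\]
so the optimal merit choice depends on the type only through $\lambda$.

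The second step is a squeeze argument. Since $\hat x$ is non-constant and weakly increasing, I would pick $\eta^*\in(\underline\eta,\overbar\eta)$ at which $\hat x$ takes strictly smaller values arbitrarily close from the left and strictly larger values arbitrarily close from the right. For $\eta_-<\eta^*<\eta_+$ with $\hat x(\eta_-)<\hat x(\eta_+)$, the IC inequalities against switching between these two merit levels yield the chord bounds $\beta/z(\alpha)\leq S(\eta_-,\eta_+)$ for types at $\eta_-$ and $\beta/z(\alpha)\geq S(\eta_-,\eta_+)$ for types at $\eta_+$, where $S(\eta_-,\eta_+):=(\hat q(\eta_+)-\hat q(\eta_-))/(\hat x(\eta_+)-\hat x(\eta_-))$. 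Since $\eta$ is strictly increasing in both arguments and $\Theta$ is open, every point on the iso-merit curve $\{\eta=\eta^*\}$ is a limit of points of $\Theta$ with merit both slightly below and slightly above $\eta^*$. Continuity of $(\alpha,\beta)\mapsto\beta/z(\alpha)$ then sandwiches its value on $\{\eta=\eta^*\}$ between the one-sided limits of $S(\eta_-,\eta_+)$ as $\eta_-\uparrow\eta^*$ and $\eta_+\downarrow\eta^*$.

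The main obstacle will be ensuring that the two one-sided limits agree, giving a single $k$. Chaining the chord inequalities above across triples of merit levels shows $\hat q$ is convex as a function of $\hat x$ on the attained allocation set, so left and right slopes at $\hat x(\eta^*)$ both exist. In the generic case where $\hat x$ has a strictly increasing continuous portion, choosing $\eta^*$ at a point of differentiability (available almost everywhere by convexity) yields the common value $k$. In the degenerate case where $\hat x$ has a discrete image (a step function), the same squeeze works at any jump point $\eta^*$ of $\hat x$, with $k$ equal to the jump ratio $(\hat q(\eta^{*+})-\hat q(\eta^{*-}))/(\hat x(\eta^{*+})-\hat x(\eta^{*-}))$.
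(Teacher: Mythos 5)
Your proof is correct in substance but follows a genuinely different route from the paper's. The paper proves a ``Generalized FOC'' (Lemma \ref{lem:MRS}) via a careful limiting argument on difference quotients---machinery built so that the result extends verbatim to the non-linear utilities of Section \ref{sec:non-linear}---and then splits on whether $\hat x$ is continuous, using continuity of the indirect utility at a jump and the FOC lemma otherwise. You instead exploit linearity head-on: dividing utility by $z(\alpha)>0$ collapses the two-dimensional type to the sufficient statistic $\lambda=\beta/z(\alpha)$, turning the problem into textbook one-dimensional screening where the implied tariff $Q=\hat q\circ\hat x^{-1}$ is convex and the chord inequalities do all the work. This is cleaner and more self-contained for the linear model (and your identification of $k$ as a derivative or jump ratio of $Q$ is more informative than the paper's conclusion), but it would not survive the generalization to Proposition 2$^\ast$, which is presumably why the paper takes the longer road. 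Your two-sided approximation of points on $\{\eta=\eta^*\}$ by nearby types with merit just above and just below, using openness of $\Theta$ and strict monotonicity of $\eta$, mirrors the paper's use of the diagonal perturbations $(\alpha^a_\delta,\beta^a_\delta)$.

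One repair is needed in your final case analysis: ``has a strictly increasing continuous portion'' versus ``step function with discrete image'' is not an exhaustive dichotomy for a weakly increasing $\hat x$ (consider a Cantor-type continuous singular function, or a pure-jump function with jumps dense in $(\underline\eta,\overbar\eta)$). The correct split---which your two arguments already cover---is simply: if $\hat x$ has a discontinuity, apply the jump-ratio squeeze there (noting that the lower bound genuinely requires approximating each point of the iso-merit curve by types with merit slightly above $\eta^*$, since direct \eqref{eq:IC} alone only yields $\lambda\in[Q'_-,\,k]$); if $\hat x$ is continuous, its image is a nondegenerate interval, $Q$ is convex and hence differentiable off a countable set, and any $\eta^*$ with $\hat x(\eta^*)$ an interior differentiability point works. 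With the cases reorganized this way, the argument is complete.
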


While the proof is relegated to the appendix, I provide the key intuition behind this result. Notice that the ordeal rule has to take the form $q(\alpha,\beta)\equiv \hat q(\eta(\alpha,\beta))$ since identical allocations of $x$ have to come with equal ordeals. Moreover, note that $\hat q$ has to be increasing---otherwise one could deviate and receive weakly more $x$ with strictly smaller ordeals. Now, consider the case where $\hat x$ and $\hat q$ are smoothly increasing around some merit level $\eta^*$. Then the first-order conditions for all agents with merit $\eta^*$ must hold there:
\begin{equation}\label{eq:genex}
        \text{for all }  (\alpha,\beta) \text{ such that } \eta(\alpha,\beta)=\eta^*, \quad \frac{\beta}{z(\alpha)}  = \frac{\hat q'(\eta^*)}{\hat x'(\eta^*)}.        
\end{equation}
Satisfying this condition would require equating the $MRS$s of all agents with merit $\eta^*$, of whom there are uncountably many. However, the $MRS$s of those agents are fixed, and thus, except for knife-edge cases, the designer will “lack degrees of freedom” to ensure this condition (Figure \ref{fig:12}).\footnote{As shown in Section \ref{sec:non-linear}, this fundamental problem is not due to the linearity of the utility function. Even in a richer model, the designer could only attempt equating the $MRS$s by choosing three values: the allocation and ordeal at $\eta^*$, and the value all the $MRS$s would take; this still proves insufficient.}

    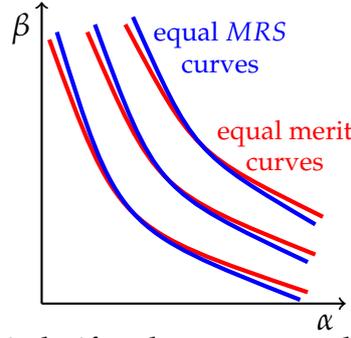
\begin{figure}[!htb]
        \centering
        \captionsetup{width=.64\linewidth}
        \begin{tikzpicture}[scale=1] 
                \draw[->,thick] (0,0) -- (4,0);
                \draw[->,thick] (0,0) -- (0,4);
                
                \node[below left] at (4,0) {$\alpha$};
                \node[below left] at (0,4) {$\beta$};
                

                \draw[red, ultra thick] (0.2,3.6) .. controls (1,1) .. (3.4,0.05);
                \draw[red, ultra thick] (0.7,3.7) .. controls (1.5,1.5) .. (3.5,0.55);
                \draw[red, ultra thick] (1.2,3.8) .. controls (2,2) .. (3.6,1.05);
            
                \draw[blue, ultra thick] (0.1,3.5) .. controls (1,1) .. (3.5,0.15);
                \draw[blue, ultra thick] (0.6,3.6) .. controls (1.5,1.5) .. (3.6,0.65);
                \draw[blue, ultra thick] (1.1,3.7) .. controls (2,2) .. (3.7,1.15);
            
                \node[red, align=center, font=\footnotesize] at (2.35,3.4) {equal $MRS$ \\ curves};
                \node[blue, align=center, font=\footnotesize] at (3.2,2.1) {equal merit \\ curves};
            \end{tikzpicture}
        \vspace{-13pt} 
        \caption{Except in knife-edge cases, no choice of $\hat x(\eta^*)$ and $\hat q(\eta^*)$ can equate the $MRS$s of all agents with merit $\eta^*$.}
        \label{fig:12}
    \end{figure}

 \subsection{Screening with payments and ordeals}

As it turns out, using both screening devices allows for rich screening without violating the equity constraint. Intuitively, every amount of the good can now be offered with a menu of payment options composed of different fee and ordeal pairs. It is crucial, however, that the designer has one screening device preferred by the poor and another preferred by the rich---two arbitrary screening devices would not suffice. When this condition is met, the designer can fine-tune these payment menus to achieve the precise bias in allocation that equity requires. Loosely speaking, being able to compose such menus solves the problem of “too few degrees of freedom” that we encountered when only ordeals were used. Thus, despite the equity constraint being extremely stringent, my main result is a positive one. 

For this result, I also impose the following technical assumption.
\begin{assumption}\label{ass:tech} 
$\lim_{\alpha\to \infty} \frac{z(\alpha)}{w(\alpha)} =0$,  $\lim_{\alpha\to -\infty} \frac{z(\alpha)}{w(\alpha)} =\infty$.
\end{assumption}
Since it imposes restrictions only on the behavior of $w,z$ outside of the type space, it has little economic content and is made for mathematical convenience.

\begin{theorem}\label{th:3} 
An allocation rule $x(\alpha,\beta)$ is equitable and implementable if and only if $x(\alpha,\beta) \equiv \hat x(\eta(\alpha,\beta))$, where $\hat x$ is increasing.
\end{theorem}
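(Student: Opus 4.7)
The only-if direction is immediate from Lemma \ref{lem:inc}. For the if direction, I construct $(p, q)$ implementing any increasing $\hat x$ by offering, at each merit level $\eta$, a \emph{menu} of $(p, q)$ options (all yielding the same allocation $\hat x(\eta)$), designed so that each type on the iso-merit curve self-selects her option. The two screening instruments, together with the monotone wedge between how poor and rich types value them, provide the degrees of freedom that Propositions \ref{th:1}--\ref{th:2} show to be missing when only one instrument is available.

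\textbf{Main construction.} Reparameterize types by $(\eta, \alpha)$, writing $\phi(\eta, \alpha)$ for the unique $\beta$ with $\eta(\alpha, \beta) = \eta$. I seek rules $p(\alpha, \beta) = P_\eta(\alpha)$, $q(\alpha, \beta) = Q_\eta(\alpha)$ such that the curve $M_\eta := \{(P_\eta(\alpha), Q_\eta(\alpha))\}$ in $(p, q)$-space has tangent slope $-w(\alpha)/z(\alpha)$ at the point for type $\alpha$. This matches $\alpha$'s indifference-line slope in $(p, q)$-space, and strict monotonicity of $w/z$ in $\alpha$ then gives each type a unique best point on $M_\eta$. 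Writing $\Gamma(\eta, \alpha) := w(\alpha) P_\eta(\alpha) + z(\alpha) Q_\eta(\alpha)$ for the resulting truthful cost, the first-order condition for picking the true merit level reads $\Gamma_\eta = \phi(\eta, \alpha) \hat x'(\eta)$, which pins down $\Gamma$ up to a free initial condition at $\underline \eta$. Differentiating the tangent-slope condition along the iso-merit curve and combining with the identity $w P_\eta + z Q_\eta = \Gamma$ yields $w' P_\eta + z' Q_\eta = \Gamma_\alpha$; the resulting $2 \times 2$ linear system in $(P_\eta, Q_\eta)$ is nonsingular since $wz' - zw' \neq 0$, so the rules are uniquely determined.

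\textbf{Global IC.} The utility of type $(\alpha_0, \beta_0)$ from reporting $(\eta, \alpha)$ is $\beta_0 \hat x(\eta) - w(\alpha_0) P_\eta(\alpha) - z(\alpha_0) Q_\eta(\alpha)$. For each fixed $\eta$, the monotone-tangent structure of $M_\eta$ forces the best $\alpha$-report to be $\alpha_0$, since $\alpha_0$'s indifference line is tangent to $M_\eta$ precisely at the $\alpha_0$-point. The reduced objective $U^*(\eta) = \beta_0 \hat x(\eta) - \Gamma(\eta, \alpha_0)$ then has derivative $(\beta_0 - \phi(\eta, \alpha_0)) \hat x'(\eta)$, which is positive for $\eta$ below and negative for $\eta$ above the true merit $\eta_0 := \eta(\alpha_0, \beta_0)$, because $\phi(\cdot, \alpha_0)$ is strictly increasing and passes through $\beta_0$ at $\eta_0$. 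Hence $U^*$ attains a global maximum exactly at the true merit level.

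\textbf{Main obstacle.} The delicate step will be choosing the initial condition $\Gamma(\underline\eta, \cdot)$ so that $Q_\eta \geq 0$ everywhere, the second-order condition for the within-merit optimization (equivalently $P_\eta'(\alpha) < 0$) holds, and \eqref{eq:IR} is satisfied. Assumption \ref{ass:tech} plays its role here by ensuring the range of indifference slopes $-w(\alpha)/z(\alpha)$ is wide enough to admit a valid menu curve over the entire projection of $\Theta$. For $\hat x$ that is merely weakly increasing or not smooth, the construction extends by approximating with smooth strictly increasing $\hat x_n \to \hat x$ and passing to a limit.
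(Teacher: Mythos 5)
Your architecture is genuinely different from the paper's: you attempt a direct ``menu'' construction via first-order conditions and a $2\times 2$ linear system in $(P_\eta,Q_\eta)$, whereas the paper reparametrizes to $(\kappa,\lambda)=(\beta/w(\alpha),-z(\alpha)/w(\alpha))$ to make utility linear, implements each \emph{threshold} rule explicitly by exhibiting a convex indirect utility function whose subgradient is the allocation--ordeal pair (Rochet's criterion), and then obtains arbitrary increasing $\hat x$ as Choquet mixtures of threshold rules using the linearity of \eqref{eq:IC} and \eqref{eq:IR}. The parts of your argument that are carried out are sound: the tangency/envelope bookkeeping is correct, the system is nonsingular because $wz'-zw'\neq 0$ (this is where the two instruments favoring different groups enters, paralleling the paper's observation that $(\alpha,\beta)\mapsto(\kappa,\lambda)$ is one-to-one), and the single-peakedness of $\eta\mapsto \beta_0\hat x(\eta)-\Gamma(\eta,\alpha_0)$ at the true merit is right given the envelope condition.

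However, what you label the ``main obstacle'' is not a loose end but the entire mathematical content of the theorem, and you do not resolve it. Your construction leaves exactly one degree of freedom --- the initial condition $\Gamma(\underline\eta,\cdot)$, a single function of $\alpha$ --- and with it you must simultaneously guarantee, for \emph{every} $(\eta,\alpha)$, that $Q_\eta(\alpha)\ge 0$, that \eqref{eq:IR} holds, and above all that the second-order condition $P_\eta'(\alpha)<0$ holds so that each menu curve is convex and the tangency point is a \emph{global} cost-minimizer. That last condition, after solving the system, involves $\Gamma_{\alpha\alpha}$, $w''$, $z''$ and second derivatives of $\eta$, none of which are sign-restricted by the model's assumptions; it is not evident (and you give no argument) that any choice of $\Gamma(\underline\eta,\cdot)$ makes it hold uniformly. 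This is precisely the step the paper's Lemma \ref{lem:new} labors over, by adding a term $\zeta\lambda^2/2$ with $\zeta$ chosen larger than the bound $M_2$ on $|\kappa^{*\prime\prime}|$ to force convexity, and a constant $\psi\ge M_1+\zeta|\overline\lambda-\underline\lambda|$ to force non-negativity of ordeals. Two further gaps: (i) your within-menu optimality argument only covers deviations $(\eta',\alpha_0)$ for which $\alpha_0$ lies in the $\alpha$-slice of $\Theta$ at merit $\eta'$ --- the paper handles off-slice deviations by extending the type space to the rectangle $[\underline\kappa,\overline\kappa]\times[\underline\lambda,\overline\lambda]$, and you would need an analogous extension of each menu; (ii) the closing sentence ``passing to a limit'' for non-smooth or weakly increasing $\hat x$ is asserted without the compactness/uniform-bound argument needed to extract convergent payment and ordeal rules, whereas the paper avoids smoothness entirely via the extreme-point decomposition. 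As written, the proposal is a plausible plan whose critical verification is missing.
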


The proof proceeds in three steps. First, I reparametrize the type space so that the new types $(\kappa,\lambda)$ correspond to agents' relative values for the good and money ($\kappa$), and those for avoiding ordeals and money $(\lambda)$. I enlarge the reparametrized type space to ensure convexity.

I then consider \emph{threshold rules}, i.e. allocation rules that give $x=1$ to agents with merit above a certain level and $x=0$ to those with merit below it. I show that any threshold rule is implementable. The proof uses the characterization by \cite{ROCHET}, which states that a (multidimensional) allocation rule is implementable in a linear multidimensional setting if and only if it is a subgradient of some convex indirect utility function. I prove that such indirect utility functions can be constructed for every threshold rule by making the slope of the ordeal rule increase sufficiently quickly in $\lambda$.

Lastly, I use the fact that any allocation rule of the form $x(\alpha,\beta)=\hat x(\eta(\alpha,\beta))$ that is increasing in merit can be written as a combination of threshold rules under some probability measure. Since \eqref{eq:IC} and \eqref{eq:IR} are linear, the fact that threshold rules are implementable means that any such $x(\alpha,\beta)$ is implementable too.

\begin{proof}
I first reparametrize types:
        \[
        \kappa = \frac{\beta}{w(\alpha)}, \quad \lambda = -\frac{z(\alpha)}{w(\alpha)}, \quad \tilde \Theta =
        \left\{\left(\frac{\beta}{w(\alpha)},-\frac{z(\alpha)}{w(\alpha)}\right): \ (\alpha,\beta)\in\Theta
        \right\}.
        \]
Notice that the mapping between $(\alpha,\beta)$ and $(\kappa,\lambda)$ is one-to-one (this would not necessarily be the case if both screening devices were less costly to the rich or to the poor). Moreover, the reparametrized type space $\tilde \Theta$ is bounded, so there exist $\underline \kappa,\overbar \kappa \in \mathbb{R}$, $ \underline \lambda,\overbar \lambda\in \mathbb{R}_{--}$ such that 
$\tilde \Theta \subset [\underline \kappa,\overbar \kappa]\times [\underline \lambda,\overbar \lambda].$
Agents' utilities (up to scaling) in the reparametrized model are given by:
        \[
        \tilde U[\kappa,\lambda;x,p,q]= \kappa x+\lambda q - p.
        \]
I also rewrite the merit function in terms of $(\kappa,\lambda)$; let $\tilde \eta: \mathbb{R}\times \mathbb{R}_{--} \to \mathbb{R}$ be defined by:
\[
        \tilde \eta\left(\frac{\beta}{w(\alpha)},-\frac{z(\alpha)}{w(\alpha)}  \right) \equiv\eta(\alpha,\beta).     
\]
Assumption \ref{ass:tech} guarantees that  $\tilde \eta(\kappa,\lambda)$ is defined everywhere on $\mathbb{R}\times \mathbb{R}_{--}$.

We are now ready to prove Theorem \ref{th:3}.
By Lemma \ref{lem:inc}, any equitable and implementable allocation rule has to take the form $x(\alpha,\beta)\equiv \hat x(\eta(\alpha,\beta))$, where $\hat x$ is increasing. It therefore suffices to show that any such allocation rule is implementable on $\Theta$. In the reparametrized type space this amounts to showing that we can implement any $\tilde x(\kappa,\lambda):\tilde \Theta \to [0,1]$ such that $\tilde x(\kappa,\lambda) \equiv \hat x(\tilde \eta (\kappa,\lambda))$, where $\hat x$ is increasing. In fact, I prove a stronger statement: consider an extension of the reparametrized type space to $[\underline{\kappa}, \overline{\kappa}] \times [\underline{\lambda}, \overline{\lambda}]$. 
Let $\underline{\tilde\eta}$ and $\overbar{\tilde\eta}$ be the minimum and maximum of $\tilde \eta (\kappa,\lambda)$ on $[\underline \kappa,\overbar \kappa]\times [\underline \lambda,\overbar \lambda]$ and notice that $(\underline \eta,\overbar \eta)\subseteq [\underline{\tilde\eta}, \overbar{\tilde\eta}]$. I then show that for any increasing $\hat x: [\underline{\tilde\eta}, \overbar{\tilde\eta}]\to [0,1]$ there exists an allocation rule $\tilde x: [\underline{\kappa}, \overline{\kappa}] \times [\underline{\lambda}, \overline{\lambda}]\to [0,1]$ that is implementable on the extended type space and satisfies $\tilde x(\kappa,\lambda)=\hat x(\tilde\eta(\kappa,\lambda))$.

To that end, define a \emph{threshold rule} as a function $\hat x: [\underline{\tilde\eta}, \overbar{\tilde\eta}]\to [0,1]$ satisfying:
\begin{equation}\label{eq:examplealloc}
        \hat x(\eta)=\begin{cases}
                1 & \text{if $\eta> \eta^*$}\\
                1 \text{ or } 0 & \text{if $\eta=\eta^*$}\\
                0 & \text{otherwise},
                         \end{cases}
\end{equation}
for some $\eta^* \in [\underline{\tilde\eta}, \overbar{\tilde\eta}]$ (Figure \ref{fig:extrapol}). Let $\mathcal{T}$ be the set of threshold rules.

\begin{figure}[!htb]
        \centering
        \captionsetup{width=.64\linewidth}
        \begin{tikzpicture}[scale=1] 
                \draw[->,thick] (0,0) -- (5.5,0);
                \draw[->,thick] (0,0) -- (0,4.2);
                
                \node[below left] at (5.5,0) {$\lambda$};
                \node[below left] at (0,4.2) {$\kappa$};
            
                \node[blue, right] at (1,3.7) {merit $\eta^*$};
            
                \fill[gray!50,opacity=0.5] (0.8,0.8) rectangle (4.5,3.2);

                \begin{scope}
                        \clip (0.8,0.8) rectangle (4.5,3.2);
                        \fill[blue!30, opacity=0.5] 
                        (1.2,3.2) .. controls (1.3,3) and (1.4,3) .. (1.6,1.4)
                        .. controls (1.7,0.5) and (1.8,0.3) .. (2,0.3)
                        .. controls (2.4,0.4) and (2.6,1) .. (2.7,1.3)
                        .. controls (2.8,1.7) and (3,1.7) .. (3.2,1.6)
                        .. controls (3.5,1.4) and (3.4,1.2) .. (3.7,0.8) -- (4.5,0.8) -- (4.5,3.2) -- (1.2,3.2) -- cycle;
                    \end{scope}

                \draw[blue, ultra thick] 
                (1.2,3.2) .. controls (1.3,3) and (1.4,3) .. (1.6,1.4)
                .. controls (1.7,0.5) and (1.8,0.3) .. (2,0.3)
                .. controls (2.4,0.4) and (2.6,1) .. (2.7,1.3)
                .. controls (2.8,1.7) and (3,1.7) .. (3.2,1.6)
                .. controls (3.5,1.4) and (3.4,1.2) .. (3.7,0.8); 
            
                \draw[blue, ultra thick] (3.7,0.8) -- (4,0.5);
            
                \draw[blue, ultra thick] (0.7,3.7) -- (1.2,3.2);

                \node[below] at (4.5,0) {$\overbar \lambda$};
                \draw[black, thick, dashed] (4.5,0) -- (4.5,3.2);
            
                \node[below] at (0.8,0) {$\underline \lambda$};
                \draw[black, thick, dashed] (0.8,0) -- (0.8,3.2);
            
                \node[left] at (0,3.2) {$\overbar \kappa$};
                \draw[black, thick, dashed] (0,3.2) -- (4.5,3.2);
            
                \node[left] at (0,0.8) {$\underline \kappa$};
                \draw[black, thick, dashed] (0,0.81) -- (4.5,0.81);
            \end{tikzpicture}
        \vspace{-10pt} 
        \caption{A threshold rule allocates $x=1$ to agents with merit above $\eta^*$ (blue region) and $x=0$ to those with merit below $\eta^*$ (grey region).}
        \label{fig:extrapol}
\end{figure}
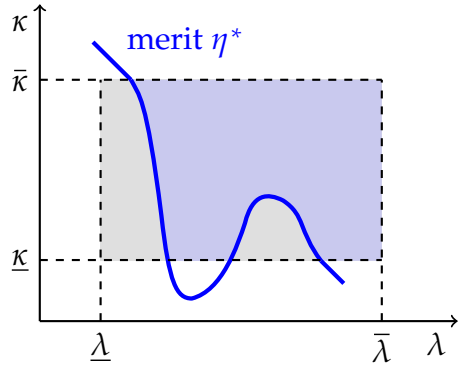

Fix any threshold rule $\hat x\in \mathcal{T}$. I will show there exists an allocation rule $\tilde x$ that is implementable on $[\underline{\kappa}, \overline{\kappa}] \times [\underline{\lambda}, \overline{\lambda}]$ and satisfies $\tilde x(\kappa,\lambda) \equiv \hat x(\tilde\eta (\kappa,\lambda))$. Following \cite{ROCHET}, the allocation rule $\tilde x(\kappa,\lambda)$ is implementable on $[\underline{\kappa}, \overline{\kappa}] \times [\underline{\lambda}, \overline{\lambda}]$ if there exists a bounded, non-negative ordeal rule $\tilde q(\kappa,\lambda)$ and a bounded, convex indirect utility function $V(\kappa,\lambda)$ such that for every  $(\kappa,\lambda)$, $[\hat x(\tilde \eta(\kappa,\lambda)) , \tilde q(\kappa,\lambda)]$ belongs to the subdifferential of $V(\kappa,\lambda)$ at that point. The following lemma (shown in the appendix) asserts that such an ordeal rule and indirect utility function indeed exist.
\begin{lemma}\label{lem:new}
There exists a bounded, non-negative ordeal rule $\tilde q(\kappa,\lambda)$ and a bounded, convex indirect utility function $V(\kappa,\lambda)$ such that $[\hat x(\tilde \eta(\kappa,\lambda)) , \tilde q(\kappa,\lambda)]$ belongs to the subdifferential of $V(\kappa,\lambda)$ for every $(\kappa,\lambda)\in [\underline \kappa,\overbar \kappa]\times [\underline \lambda,\overbar \lambda]$.
\end{lemma}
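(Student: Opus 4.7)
The plan is to build $V$ as the sum of a piecewise-linear function that encodes the allocation jump across the threshold and a convex ``cushion'' in $\lambda$ that guarantees global convexity. First, since $\tilde\eta$ is $C^{2}$ and $\tilde\eta_\kappa$ is uniformly bounded below by a positive constant (inheriting from $\eta_\beta>0$ and the continuity of $w$ on a compact $\alpha$-range), the implicit function theorem yields a $C^{2}$ threshold curve $\kappa^{*}(\lambda)$ satisfying $\tilde\eta(\kappa^{*}(\lambda),\lambda)=\eta^{*}$ for every $\lambda\in[\underline\lambda,\overbar\lambda]$; if $\kappa^{*}(\lambda)$ happens to lie outside $[\underline\kappa,\overbar\kappa]$, I would use the extension of $\tilde\eta$ to $\mathbb{R}\times\mathbb{R}_{--}$ afforded by Assumption \ref{ass:tech}. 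Differentiating the defining identity gives $(\kappa^{*})'(\lambda)=-\tilde\eta_\lambda/\tilde\eta_\kappa$, and raising $\lambda$ at a fixed $\kappa>0$ strictly increases both $\alpha$ (through $\lambda=-z/w$) and $\beta=\kappa w(\alpha)$, so $\tilde\eta_\lambda>0$ and $(\kappa^{*})'(\lambda)\leq 0$ throughout (we may take $\underline\kappa>0$ since $\tilde\Theta\subset\mathbb{R}_{++}\times\mathbb{R}_{--}$).

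Next, I would set
\[
V(\kappa,\lambda)\;=\;h(\lambda)+\bigl(\kappa-\kappa^{*}(\lambda)\bigr)^{+}\;=\;\max\bigl\{h(\lambda),\ h(\lambda)+\kappa-\kappa^{*}(\lambda)\bigr\},
\]
with $h(\lambda)=\tfrac{M}{2}\lambda^{2}+L\lambda$. I choose $M\geq\max\bigl\{0,\sup_{[\underline\lambda,\overbar\lambda]}(\kappa^{*})''\bigr\}$ (possible because $(\kappa^{*})''$ is continuous on a compact interval) and $L\geq M|\underline\lambda|$, which guarantees $h'(\lambda)\geq 0$ on $[\underline\lambda,\overbar\lambda]$ and makes both functions inside the max convex in $(\kappa,\lambda)$ (the first since $M\geq 0$; the second because its Hessian's only nonzero entry is $M-(\kappa^{*})''(\lambda)\geq 0$). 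Thus $V$ is convex as a max of convex functions, and it is clearly bounded on the compact rectangle. Define $\tilde q:=V_\lambda$; this equals $h'(\lambda)$ when $\kappa<\kappa^{*}(\lambda)$ and $h'(\lambda)-(\kappa^{*})'(\lambda)$ when $\kappa>\kappa^{*}(\lambda)$, and both expressions are non-negative and bounded.

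To verify the subdifferential condition, I would split by location. Off the curve $\kappa=\kappa^{*}(\lambda)$ the function $V$ is $C^{1}$ with $V_\kappa\in\{0,1\}$ matching $\hat x(\tilde\eta(\kappa,\lambda))$, so $(\hat x,\tilde q)=\nabla V\in\partial V$ directly. On the curve itself, $\partial V$ is the line segment connecting the two one-sided gradients $(0,h'(\lambda))$ and $(1,h'(\lambda)-(\kappa^{*})'(\lambda))$; therefore whichever value $\hat x\in\{0,1\}$ is assigned at $\eta=\eta^{*}$ by \eqref{eq:examplealloc} can be paired with a non-negative $\lambda$-component to yield a valid subgradient.

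The central obstacle is convexity. The naive choice $(\kappa-\kappa^{*}(\lambda))^{+}$ is convex only when $\kappa^{*}$ is concave, which fails for a generic merit function, so a threshold allocation cannot in general be supported as a subgradient of a convex function without further structure. The cushion $h(\lambda)$ is what resolves this: its second derivative absorbs whatever positive part $(\kappa^{*})''$ may have. Since $\tilde q\approx h'(\lambda)$, enlarging $h''$ is precisely ``making the slope of the ordeal rule increase sufficiently quickly in $\lambda$'' from the sketch preceding the lemma—it is the curvature of the ordeal schedule that lets the payment/ordeal menu bend to match the merit-equivalence curves.
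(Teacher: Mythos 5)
Your construction is the same as the paper's: you define the threshold curve $\kappa^*(\lambda)$ implicitly from $\tilde\eta(\kappa^*(\lambda),\lambda)=\eta^*$, set $V(\kappa,\lambda)=(\kappa-\kappa^*(\lambda))^+ + h(\lambda)$ with a quadratic cushion $h$, and take $\tilde q=V_\lambda$; the paper's $V$ is literally this with $\zeta=M$ and $\psi-\zeta\overbar\lambda=L$. Where you genuinely diverge is in the verification of convexity and of the subdifferential condition. The paper proves cyclic monotonicity of $\nabla V$ by a three-case analysis (using the integral identity $\kappa^*(\lambda_2)-\kappa^*(\lambda_1)=\int\kappa^{*\prime}$ to get the bounds $\zeta>M_2/2$ and $\zeta>M_2$) and then runs a line-integration argument to show that the gradient, where it exists, is a global subgradient, handling the kink curve separately by continuity. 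You instead write $V=\max\{h(\lambda),\,h(\lambda)+\kappa-\kappa^*(\lambda)\}$ and observe that each branch is convex once $M\ge\sup(\kappa^*)''$, so $V$ is convex as a max of convex functions and the subdifferential on the kink is exactly the segment between the two active gradients. This is a cleaner and shorter route to the same conclusion, and it makes transparent why the required curvature of the ordeal schedule is governed by $(\kappa^*)''$.

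One step is not fully justified: your claim that $(\kappa^*)'(\lambda)\le 0$ \emph{throughout} $[\underline\lambda,\overbar\lambda]$, which you use to conclude that $\tilde q=h'(\lambda)-(\kappa^*)'(\lambda)\ge 0$ in the upper region. The identity $(\kappa^*)'=-\tilde\eta_\lambda/\tilde\eta_\kappa$ gives $\tilde\eta_\lambda=f'(\lambda)\left[\eta_\alpha+\kappa\,\eta_\beta\,w'(f(\lambda))\right]$, which is positive only where $\kappa\ge 0$ (or not too negative). The curve $\kappa^*(\lambda)$ is defined for all $\lambda\in[\underline\lambda,\overbar\lambda]$ but need not stay inside the rectangle: for thresholds $\eta^*$ near $\underline{\tilde\eta}$ it can exit through $\kappa=\underline\kappa$ and continue into $\kappa<0$, where the sign of $(\kappa^*)'$ is no longer controlled. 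This does not threaten your convexity argument, only the non-negativity of $\tilde q$, and the fix is exactly what the paper does with its choice of $\psi$: enlarge $L$ to $M|\underline\lambda|+\max_{[\underline\lambda,\overbar\lambda]}|(\kappa^*)'|$ (finite since $\kappa^*$ is $C^2$ on a compact interval), which makes $\tilde q\ge 0$ regardless of the sign of $(\kappa^*)'$. With that one-line patch your proof is complete.
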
  

Now, let $\mathcal{A}$ be the set of weakly increasing functions $\hat x: [\underline{\tilde\eta}, \overbar{\tilde\eta}]\to [0,1]$. $\mathcal{A}$ is convex and compact in the product topology and $\mathcal{T}$ is the set of its extreme points. Hence, by Choquet's theorem, for every $\hat x^*\in \mathcal{A}$ there exists a probability measure $\mu$ on $\mathcal{T}$ such that $\hat x^* = \int_{\mathcal{T}} \hat x \mu(d\hat x)$ \citep{phelps2001lectures}. However, we have already shown that for every $\hat x\in \mathcal{T}$ there exists an allocation rule $\tilde x[\hat x]$ satisfying $\tilde x[\hat x](\kappa,\lambda)\equiv \hat x(\tilde \eta(\kappa,\lambda))$ that is implementable on $[\underline{\kappa}, \overline{\kappa}] \times [\underline{\lambda}, \overline{\lambda}]$. Define:
\[
\tilde x^* := \int_{\mathcal{T}} \tilde x[\hat x] \mu(d \hat x) = \int_{\mathcal{T}} \hat x \mu(d \hat x)= \hat x^*.
\]
Since all $\tilde x[\hat x]$ are implementable and \eqref{eq:IC} and \eqref{eq:IR} are linear in $x,p$ and $q$, it follows that $\tilde x^*$ is implementable too.  
\end{proof}

\section{Observable wealth}\label{sec:obswealth}

While I assumed that neither need nor wealth are observable, the designer usually has some information about them. For instance, tax data proxies for one's wealth, even if some income sources or assets remain unobserved. In such cases, agents' private information can be thought of as \emph{residual uncertainty} after accounting for observables. Such uncertainty can still be substantial even in means-tested programs (see footnote \ref{footnote:2}).

Still, what can the designer do if she perfectly observes agents' need for money $\alpha$? As it turns out, the set of equitable allocation rules that are implementable with one instrument when $\alpha$ is observed is identical to the set of equitable allocation rules that are implementable with two instruments when $\alpha$ is private. Intuitively, when the designer observes $\alpha$, she can “control for” the fact that some agents prefer the good because they are wealthy and screen purely based on need.
\begin{proposition}
Suppose need for money $\alpha$ is observable and the designer uses either only payments or only ordeals to screen. Then the allocation rule $x(\alpha,\beta)$ is equitable and implementable if and only if $x(\alpha,\beta) \equiv \hat x(\eta(\alpha,\beta))$, where $\hat x$ is increasing.
\end{proposition}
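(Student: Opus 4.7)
The plan is to exploit the fact that observability of $\alpha$ reduces the problem to a family of one-dimensional screening problems, one for each $\alpha$-slice, which eliminates the cross-$\alpha$ incentive constraints that drove the impossibility results in Propositions \ref{th:1} and \ref{th:2}. Both cases (payments only and ordeals only) can then be handled simultaneously by writing the relevant instrument abstractly.

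For necessity, equity immediately yields $x(\alpha,\beta)\equiv \hat x(\eta(\alpha,\beta))$ for some $\hat x:(\underline \eta,\overbar \eta)\to[0,1]$. The argument in the proof of Lemma \ref{lem:inc} applies unchanged here: within each $\alpha$-slice, the problem is one-dimensional and quasi-linear in $\beta$, so implementability forces $\beta\mapsto x(\alpha,\beta)$ to be weakly increasing. Since $\eta(\alpha,\beta)$ is strictly increasing in $\beta$ and every merit level in $(\underline \eta,\overbar \eta)$ is attained along some $\alpha$-slice, monotonicity propagates to $\hat x$.

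For sufficiency, fix any increasing $\hat x$. Because $\alpha$ is observable, the designer can post a separate menu for each $\alpha$ and needs only to guarantee incentive compatibility with respect to misreports of $\beta$. Within the $\alpha$-slice, agents have one-dimensional type $\beta$, quasi-linear utility of the form $\beta x - c(\alpha)\, m$, where $c(\alpha)=w(\alpha)$, $m=p$ in the payments case and $c(\alpha)=z(\alpha)$, $m=q$ in the ordeals case, and the target allocation $\beta\mapsto \hat x(\eta(\alpha,\beta))$ is weakly increasing. This is a textbook Myerson environment: I would construct the instrument schedule via the envelope formula
\[
c(\alpha)\, m(\alpha,\beta) \;=\; \beta\, \hat x(\eta(\alpha,\beta)) \;-\; \int_{\underline \beta(\alpha)}^{\beta} \hat x(\eta(\alpha,s))\, ds,
\]
where $\underline \beta(\alpha):=\inf\{\beta:(\alpha,\beta)\in \Theta\}$. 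Monotonicity in $\beta$ yields \eqref{eq:IC} by the standard single-crossing argument; anchoring at $\underline \beta(\alpha)$ with zero surplus delivers \eqref{eq:IR}; and the resulting $m$ is non-negative, so in the ordeals case the constraint $q\geq 0$ is respected.

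The main obstacle is essentially cosmetic: because $\Theta$ is open, $\underline \beta(\alpha)$ is not attained, so the envelope construction must be interpreted as a limit. Boundedness of $\Theta$ and continuity of $\hat x\circ \eta$ on each slice ensure that $m(\alpha,\beta)$ is well-defined, bounded, and satisfies the required constraints in the limit. The striking feature relative to Proposition \ref{th:2} is that the ``degrees of freedom'' problem simply vanishes: without needing to deter misreports across $\alpha$, there is no requirement that the $MRS$s of equally-meritorious agents align, and so a single instrument suffices.
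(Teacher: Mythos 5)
Your proof is correct and takes essentially the same route as the paper's: observability of $\alpha$ reduces the problem to independent one-dimensional quasi-linear screening problems on each $\alpha$-slice, where implementability is equivalent to weak monotonicity in $\beta$, and combining this with the $\hat x(\eta(\alpha,\beta))$ form of equitable rules and the strict monotonicity of $\eta$ in $\beta$ yields the result. The only difference is presentational --- the paper invokes the standard one-dimensional characterization in one line, whereas you spell out the envelope construction of the instrument schedule and verify \eqref{eq:IR} and $q\geq 0$ explicitly.
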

\begin{proof}
Since $\alpha$ is observable, the allocation rule can be implemented separately for every value of need for money $\alpha$. Thus, the allocation rule $x(\alpha,\beta)$ is implementable if and only if it is weakly increasing in the value for the good $\beta$. Recall the equity constraint is satisfied if and only if $x(\alpha,\beta)\equiv \hat x(\eta(\alpha,\beta))$. Since $\eta(\alpha,\beta)$ is strictly increasing in $\beta$, $x(\alpha,\beta) \equiv \hat x(\eta(\alpha,\beta))$ is equitable and implementable if and only if $\hat x$ is weakly increasing.
\end{proof}

\section{Measures of equity violation}\label{sec:relax}

My model of societal perceptions of equity was built around a single merit function. In reality, however, people often share general principles concerning equity and desert but hold different opinions about finer trade-offs or ways in which these principles should be applied. Real-world equity constraints on policymakers would therefore be less demanding than my analysis suggests. While Theorem \ref{th:3} says that rich screening with multiple instruments is possible even under such overly strong restrictions, it is also interesting to compare mechanisms screening with only payments and only ordeals when the equity constraint is relaxed.

A na\"ive intuition suggests that screening with ordeals violates equity less as it biases the allocation “in the right direction”: among agents with the same need, it allocates more to those with \emph{lower} wealth, and thus to those with higher merit. However, as discussed in Subsection \ref{screenqueueing}, a mechanism screening with ordeals will generically fail to pool together agents in a way that “matches the shape” of the merit curve. Thus, in some cases this “shape effect” could dominate the aforementioned “direction effect”. Still, one could expect the direction effect to dominate when merit is primarily determined by one's wealth, that is, when $\eta_\alpha \gg \eta_\beta$. This is plausible in some settings: while we might expect deservingness of public housing to depend strongly on both one's income and housing situation, the deservingness of a free school lunch is likely to depend almost entirely on wealth. As it turns out, however, even this intuition is not fully accurate. This section shows that comparing the sizes of equity violations is more subtle, and depends crucially on the way equity violations are evaluated. Indeed, I show that even simple measures of equity violation based on the principle that equally deserving individuals should be treated equally (see, e.g., \cite{dwork2012fairness}) may deliver the opposite conclusion. This perhaps surprising result illustrates that the equity ranking of screening mechanisms is highly sensitive to how violations of equal treatment are quantified, and that intuitive “directional” arguments can be overturned under fairness metrics built on appealing principles. It therefore points to a broader conceptual challenge of specifying normatively appealing fairness metrics.

I now define the aforementioned notion. To that end, I first develop a measure of \emph{equity violation} capturing how far away a particular allocation is from satisfying the equity constraint (here, I interpret the merit function as a “rough consensus” among the public). My measure assumes that every agent assesses the allocation's equity by looking at agents similar to herself and comparing those with the same allocation as her to those with the same merit as her. In other words, she compares agents similar to herself who \emph{are} treated the same as her with those who \emph{should} be treated the same. The further apart these two sets are, the more inequitable the allocation seems to her. Then, the degree to which the whole allocation violates equity is the size of the largest such “local equity violation”. To formalize this notion, let us define $\angle: \mathbb{R}\cup \{-\infty,\infty\} \to [0,\pi]$ as the angle between the horizontal axis and a line with a given slope.\footnote{Formally, we have:
\[
\angle(m) =
\begin{cases}
\arctan(m), & m \ge 0,\\[4pt]
\arctan(m) + \pi, & m < 0.
\end{cases}
\]
While my results do not require this exact functional form, it lets us visualize the size of the local equity violation.
} 

\begin{definition}[Equity violation]\label{def:eqviol}
For every type $(\alpha,\beta)\in \Theta$, let $D(\alpha,\beta)$ be the set of directions in which the allocation is locally constant:
\[
D(\alpha,\beta) = \{d\in \mathbb{R}^2: \ \nabla_d x(\alpha,\beta) =0 \}.
\]
Let the \textbf{local equity violation} for type $(\alpha,\beta)$ be:
\begin{equation}
        l(\alpha,\beta) = \begin{cases}
            \infty & \text{if }D(\alpha,\beta) = \emptyset,\\
            \inf_{(d_1,d_2)\in D(\alpha,\beta)} \Big|\angle\left(\frac{d_2}{d_1}\right) - \angle\left(-\frac{\eta_\alpha(\alpha,\beta)}{\eta_\beta(\alpha,\beta)}\right) \Big|
            & \text{otherwise}.\footnotemark
        \end{cases}
    \end{equation}
The \textbf{equity violation} of the allocation rule $x(\alpha,\beta)$, denoted $L(x)$, is its largest local equity violation:
\[
L(x):= \sup_{(\alpha,\beta) \in \Theta} l(\alpha,\beta).
\]
\end{definition}
To build intuition, consider a smooth allocation rule that is strictly increasing in need for the good $\beta$ and fix some type $(\alpha^a,\beta^a)$. Then all types with allocations equal to that of $(\alpha^a,\beta^a)$ will lie on a smooth curve passing through $(\alpha^a,\beta^a)$. Figure \ref{fig:2subfig1} illustrates such a curve together with this type's \emph{iso-merit curve}, that is, the set of types with the same merit as $(\alpha^a,\beta^a)$. Since both of these curves are smooth, we can compare “how far apart” they are in the neighborhood of $(\alpha^a,\beta^a)$ by looking at the angle between their tangents there---this angle measures the \emph{local equity violation} at $(\alpha^a,\beta^a)$ (Figure \ref{fig:2subfig2}).

\begin{figure}[h!]
        \centering
        \captionsetup{width=.64\linewidth}
        \begin{subfigure}{0.33\linewidth}
                \begin{tikzpicture}[scale=1] 
                        \draw[->,thick] (0,0) -- (4,0);
                        \draw[->,thick] (0,0) -- (0,4);
                        
                        \node[below left] at (4,0) {$\alpha$};
                        \node[below left] at (0,4) {$\beta$};
                        
        
                        \draw[blue, ultra thick] (0.7,3.7) .. controls (1.5,1.5) .. (3.5,0.55);
                        \draw[red, ultra thick] (0.6,3.2) .. controls (1.3,1.5) .. (3.6,0.9);
        
                        \fill[blue] (1.9,1.4) circle (3pt);
                        \node[blue, below left, font=\small] at  (2,1.4) {$(\alpha^a,\beta^a)$};
                        \node[red,  font=\small] at (3.2,2) {allocation};
                        \node[red,  font=\small] at (3.2,1.6) {$x(\alpha^a,\beta^a)$};

                        \node[blue, right, font=\small] at (0.9,3.4) {merit $\eta(\alpha^a,\beta^a)$};
                      \end{tikzpicture}
            \caption{Figure \ref{fig:2subfig1}: The iso-merit curve and the iso-allocation curve at type $(\alpha^a,\beta^a)$.}
            \label{fig:2subfig1}
        \end{subfigure}
        \ \ \
        \begin{subfigure}{0.33\linewidth}
                \begin{tikzpicture}[scale=1]
                        \draw[->,thick] (0,0) -- (4,0);
                        \draw[->,thick] (0,0) -- (0,4);
                        
                        \node[below left] at (4,0) {$\alpha$};
                        \node[below left] at (0,4) {$\beta$};
                        
                        \draw[green!50!black, line width=1.5mm] (1.9,2.1) arc (155:133:0.4);
                        \draw[blue, ultra thick] (0.7,3.7) -- (3.5,0.55);
                        \draw[red, ultra thick] (0.6,2.7) -- (3.6,1.28);
                        
                        \fill[blue] (2.3,1.9) circle (3pt);
                        \node[blue, above right, font=\small] at  (2.3,1.7) {$(\alpha^a,\beta^a)$};
                        \node[red,  font=\small] at (1.2,1.9) {allocation};
                        \node[red,  font=\small] at (1.2,1.5) {$x(\alpha^a,\beta^a)$};
                        \node[blue, right, font=\small] at (1.2,3.4) {merit $\eta(\alpha^a,\beta^a)$};

                        \node[green!50!black, above right, font=\small] at (1.7,2.3) {$l(\alpha^a,\beta^a)$};
                    \end{tikzpicture}
            \caption{Figure \ref{fig:2subfig2}: The slopes of the iso-merit curve and the iso-allocation curve at type $(\alpha^a,\beta^a)$.}
            \label{fig:2subfig2}
        \end{subfigure}
        \label{fig:mainfig}
\end{figure}

Note that the slope of the iso-merit curve, given by $-\eta_\alpha/\eta_\beta$, is going to be \emph{more steep} when merit depends more strongly on wealth than it does on need, i.e. when $\eta_\alpha$ is large relative to $\eta_\beta$. Conversely, the iso-merit curve will be flat when merit is mostly determined by need. The following proposition then says that, as long as iso-merit curves are sufficiently flat everywhere, there is a non-constant allocation rule implemented with only ordeals that violates equity by less than any non-constant allocation rule implemented using only payments. In that sense, ordeals violate equity by less in cases where merit is \emph{weakly} dependent on wealth. Moreover, as discussed later, payments sometimes violate equity by less than ordeals even when merit is strongly tied to wealth. Perhaps surprisingly, this contradicts the aforementioned ``directional" intuition.

\begin{proposition}\label{prop:relax}
Assume ordeals are bounded above by some $\overbar q$. Then there exists $M<0$ with the following property: suppose that the slope of the iso-merit curve is flatter than $M$ at every type, so:
        \[
\text{for all }(\alpha,\beta)\in \Theta, \quad        -\frac{\eta_\alpha(\alpha,\beta)}{\eta_\beta(\alpha,\beta)} > M.
        \]
Then there exists a non-constant allocation rule $x_q$ that is implementable with only ordeals that produces a strictly smaller equity violation than any non-constant allocation rule $x_p$ that is implementable with only payments: $L(x_q)< L(x_p)$.
        \end{proposition}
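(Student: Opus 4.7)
The plan is to reduce each mechanism's equity violation to an angular quantity pinned down entirely by the screening instrument, and then to exhibit a uniform gap between these quantities when iso-merit curves are nearly horizontal.

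First, I would characterize iso-allocation curves. By the same reparametrization and one-dimensional screening argument used in Proposition~\ref{th:1}, any implementable payment-only allocation takes the form $x_p(\alpha,\beta) \equiv \hat x_p(\beta/w(\alpha))$ with $\hat x_p$ weakly increasing; by implicit differentiation its iso-allocation curves have slope $s_p(\alpha,\beta) = w'(\alpha)\beta/w(\alpha) > 0$. Symmetrically, implementable ordeal-only allocations take the form $\hat x_q(\beta/z(\alpha))$ with iso-allocation slope $s_q(\alpha,\beta) = z'(\alpha)\beta/z(\alpha) < 0$. At any type where the (smooth) allocation is strictly monotone, $D(\alpha,\beta)$ collapses to the iso-allocation tangent and the local equity violation equals $\phi_p(\alpha,\beta) := |\angle(s_p) - \angle(s_\eta)|$ or $\phi_q(\alpha,\beta) := |\angle(s_q) - \angle(s_\eta)|$, with $s_\eta := -\eta_\alpha/\eta_\beta$.

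Second, I would deduce bounds on $L$. Any non-constant $x_p$ has $\hat x_p$ strictly increasing at some $\kappa_0$, so the local equity violation equals $\phi_p$ at every type on the corresponding iso-$\kappa$ curve, yielding $L(x_p) \geq \inf_{\Theta}\phi_p$. For ordeals, I would take $\hat x_q$ smooth, weakly increasing, and of arbitrarily small range; by the standard envelope the associated ordeals are likewise arbitrarily small and in particular below $\bar q$. This gives an implementable non-constant $x_q$ with $L(x_q) \leq \sup_{\Theta}\phi_q$.

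Third, and most critically, I would show that as $M \to 0^-$ one has $\sup_\Theta \phi_q < \inf_\Theta \phi_p$ strictly. Since $s_q$ is bounded away from $0$ on $\bar\Theta$ (via boundedness of $\beta$ away from $0$ and of $z'/z$), one can pick $M$ with $\max_\Theta s_q < M < 0$; the hypothesis $s_\eta > M$ then forces $s_\eta$ close to $0^-$ and strictly above $s_q$ pointwise. In this regime $\angle(s_\eta) \to \pi$ uniformly, so $\phi_p(\alpha,\beta) \to \pi - \arctan(s_p)$ and $\phi_q(\alpha,\beta) \to \arctan(|s_q|)$. Since $s_p, |s_q|$ are uniformly bounded on $\Theta$ and $\arctan(\sup s_p) + \arctan(\sup|s_q|) < \pi$, the limit yields $\sup_\Theta \phi_q < \pi/2 < \inf_\Theta \phi_p$ strictly. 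By continuity of $\angle$ bounded away from $0$, this separation persists for $M$ sufficiently close to $0$, giving $L(x_q) \leq \sup_\Theta \phi_q < \inf_\Theta \phi_p \leq L(x_p)$ for every non-constant payment-only $x_p$.

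The main obstacle I anticipate is making the lower bound $L(x_p) \geq \inf_\Theta \phi_p$ airtight for every implementable $x_p$, including non-smooth or step-function rules where $D(\alpha,\beta)$ at points of discontinuity needs to be interpreted carefully. I would handle this either by smooth approximation, or by reading off the local violation at jump types directly from the tangent to the discontinuity locus, which again coincides with the iso-$\kappa$ slope.
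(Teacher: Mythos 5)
Your proposal follows essentially the same route as the paper: identify the (positively sloped) iso-allocation direction forced on any payment-only rule and the (negatively sloped, uniformly bounded) one forced on any ordeal-only rule, lower-bound $L(x_p)$ by the angle to a nearly horizontal iso-merit tangent, and construct a small ordeal-only rule whose violation sits strictly below that bound; the only cosmetic differences are that you exploit the linear-model reduction to $\hat x(\beta/w(\alpha))$ and $\hat x(\beta/z(\alpha))$ and use a smooth small-range rule where the paper uses a binary one. The one substantive caveat is the step you yourself flag: the uniform lower bound $L(x_p)\geq\inf_\Theta\phi_p$ over \emph{all} implementable non-constant $x_p$ cannot be obtained by smooth approximation (approximating an arbitrary $x_p$ does not transfer a lower bound on $L$ back to it, and at types where $\hat x_p$ is continuous with vanishing derivative one has $D=\mathbb{R}^2$ and $l=0$); you must instead locate a type at which the allocation is genuinely locally non-constant and argue directly that $D$ there is either empty (so $l=\infty$) or confined to the upward-sloping iso-allocation/discontinuity direction, which is your second proposed fix and is exactly what the paper does via its Fact~\ref{lem:notconst} and the generalized first-order condition. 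Note also that both your argument and the paper's implicitly need $\beta$ (hence $s_q$) bounded away from zero on $\Theta$ to choose $M$ strictly between $\sup_\Theta s_q$ and $0$.
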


While the proof is relegated to the appendix, I illustrate its key intuition with the case of a smooth allocation that increases in the value for the good $\beta$. Fix some type $(\alpha^a,\beta^a)$ and compare two allocations: $x_p$ implemented using only payments and $x_q$ implemented using only ordeals. Like before, the sets of agents with the same allocation as $(\alpha^a,\beta^a)$ will be smooth curves passing through that point. Moreover, the insights from Subsections \ref{screeningpayments} and \ref{screenqueueing} tell us that the curve for the allocation rule $x_p$ will be upwards-sloping (Figure \ref{fig:3subfig1}), while the curve for the allocation rule $x_q$ will be downwards-sloping (Figure \ref{fig:3subfig2}).

\begin{figure}[h!]
        \centering
        \begin{subfigure}{0.33\linewidth}
                \begin{tikzpicture}[scale=1] 
                        \draw[->,thick] (0,0) -- (4,0);
                        \draw[->,thick] (0,0) -- (0,4);
                        
                        \node[below left] at (4,0) {$\alpha$};
                        \node[below left] at (0,4) {$\beta$};
                        
                        \draw[blue, ultra thick] (0.7,3.7) .. controls (1.5,1.5) .. (3.5,0.55);
                        \draw[red, ultra thick] (3.6,3.6) .. controls (2.4,1.6) .. (0.6,0.65);
                        
                        \fill[blue] (1.9,1.4) circle (3pt);
                        \node[blue, right, font=\small] at  (2.1,1.4) {$(\alpha^a,\beta^a)$};
                        
                        \node[blue, font=\small] at (0.85,1.83) {merit};
                        \node[blue, font=\small] at (0.85,1.43) {$\eta(\alpha^a,\beta^a)$};
                        \node[red,  font=\small] at (2.32,3.5) {allocation};
                        \node[red,  font=\small] at (2.32,3.1) {$x(\alpha^a,\beta^a)$};
                        
                    \end{tikzpicture}
            \caption{Figure \ref{fig:3subfig1}: Allocation rule implemented using only payments}
            \label{fig:3subfig1}
        \end{subfigure}
        \ \ \ 
        \begin{subfigure}{0.33\linewidth}
                \begin{tikzpicture}[scale=1] 
                        \draw[->,thick] (0,0) -- (4,0);
                        \draw[->,thick] (0,0) -- (0,4);
                        
                        \node[below left] at (4,0) {$\alpha$};
                        \node[below left] at (0,4) {$\beta$};
                        
        
                        \draw[blue, ultra thick] (0.7,3.7) .. controls (1.5,1.5) .. (3.5,0.55);
                        \draw[red, ultra thick] (0.6,3.2) .. controls (1.3,1.5) .. (3.6,0.9);
        
                        \fill[blue] (1.9,1.4) circle (3pt);
                        \node[blue, below left, font=\small] at  (2,1.4) {$(\alpha^a,\beta^a)$};
                        \node[red,  font=\small] at (3.2,2) {allocation};
                        \node[red,  font=\small] at (3.2,1.6) {$x(\alpha^a,\beta^a)$};

                        \node[blue, right, font=\small] at (0.9,3.4) {merit $\eta(\alpha^a,\beta^a)$};
                      \end{tikzpicture}
            \caption{Figure \ref{fig:3subfig2}: Allocation rule implemented using only ordeals}
            \label{fig:3subfig2}
        \end{subfigure}
        \label{fig:mainfig}
    \end{figure}

Let us now compare the local equity violations of these two allocation rules. Figure  \ref{fig:4subfig1} illustrates that if the iso-merit curve is sufficiently flat, its angle with the iso-allocation curve for $x_q$ will be smaller than that with the iso-allocation curve for $x_p$. If we can impose a sufficiently low uniform bound on the slopes of iso-merit curves, this will be true for every type and every allocation rule $x_p$ implemented with only payments.  Figure \ref{fig:4subfig2}, on the other hand, illustrates why the result of Proposition \ref{prop:relax} fails when iso-merit curves are not flat enough.

\begin{figure}[h!]
        \centering
        \captionsetup{width=.64\linewidth}
        \begin{subfigure}{0.33\linewidth}
                \begin{tikzpicture}[scale=1]
                        \draw[->,thick] (0,0) -- (4,0);
                        \draw[->,thick] (0,0) -- (0,4);
                        
                        \draw[dashed] (2.3,0) -- (2.3,3.9);
                        \draw[dashed] (0,1.9) -- (4,1.9);

                        \node[below left] at (4,0) {$\alpha$};
                        \node[below left] at (0,4) {$\beta$};
                        
                        \draw[green!50!black, line width=1.5mm] (1.05,2.2) arc (165:157:2);
                        \draw[green!50!black, line width=1.5mm] (1.6,2.07) arc (161:82:0.83);
                        \draw[red, ultra thick] (0.6,2.7) -- (3.6,1.28);
                        \draw[blue, ultra thick] (0.4,2.4) -- (3.7,1.5);
                        \draw[brown, ultra thick] (1.9,0.4) -- (2.75,3.6);
                    
                        \fill[blue] (2.3,1.9) circle (3pt);
                        \node[red,  font=\small] at (1.1,3) {$x_q(\alpha^a,\beta^a)$};
                        \node[brown,  font=\small] at (3.52,2.6) {$x_p(\alpha^a,\beta^a)$};

                    \end{tikzpicture}
            \caption{Figure \ref{fig:4subfig1}}
            \label{fig:4subfig1}
        \end{subfigure}
        \ \ \ 
        \begin{subfigure}{0.33\linewidth}
                \begin{tikzpicture}[scale=1]
                        \draw[->,thick] (0,0) -- (4,0);
                        \draw[->,thick] (0,0) -- (0,4);
                        \draw[dashed] (2.3,0) -- (2.3,3.9);
                        \draw[dashed] (0,1.9) -- (4,1.9);
                        \node[below left] at (4,0) {$\alpha$};
                        \node[below left] at (0,4) {$\beta$};
                        
                        \draw[green!50!black, line width=1.5mm] (1.05,2.2) arc (165:107:1.5);
                        \draw[green!50!black, line width=1.5mm] (2.2,2.5) arc (100:80:0.8);
                        \draw[blue, ultra thick] (1.95,3.7) -- (2.55,0.55);
                        \draw[red, ultra thick] (0.4,2.4) -- (3.7,1.5); 
                        \draw[brown, ultra thick] (1.9,0.4) -- (2.75,3.6);
                       
                        \fill[blue] (2.3,1.9) circle (3pt);
                        \node[red,  font=\small] at (1.2,1.6) {$x_q(\alpha^a,\beta^a)$};
                        \node[brown,  font=\small] at (3.47,2.6) {$x_p(\alpha^a,\beta^a)$};
                    \end{tikzpicture}
                    \caption{Figure \ref{fig:4subfig2}}
            \label{fig:4subfig2}
        \end{subfigure}
        \caption{If the iso-merit curve is sufficiently steep, screening with money may be more equitable than screening with ordeals.}
        \label{fig:mainfig}
    \end{figure}
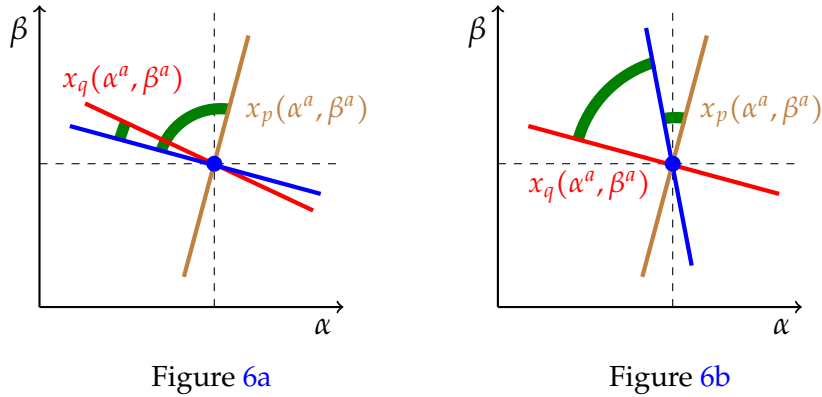

\section{Non-linear utility functions}\label{sec:non-linear}

While the main section considers a linear utility specification, many insights from the model do not rely on this structure. In this section, I therefore consider the following richer specification:
\[
U[\alpha,\beta;x,p,q] = v(\beta,x)-w(\alpha,p)-z(\alpha,q),
\]
where the following conditions hold for all $(\alpha,\beta)\in \mathbb{R}^2$ and all $x\in [0,1], p\in\mathbb{R}, q\in \mathbb{R}_+$:
\begin{enumerate}
        \item $v, w,z$ are twice continuously differentiable.
        \item $v_x>0$, \   $w_p>0$,\  $z_q>0$.
        \item $v_{\beta x}>0, \  w_{\alpha p}>0, \  \ z_{\alpha q}<0$.
        \item $v(\beta,0) = w(\alpha,0)=z(\alpha,0)=0$.
\end{enumerate}
Thus, the linearity assumption is replaced by weaker single-crossing conditions. Indeed, note that the baseline model is a special case of the one considered in this section.

In this setting, Proposition \ref{th:1} about the impossibility of equitable screening with money alone (as well as its proof) remains the same. Proposition \ref{th:2} requires the following modification:
\newtheorem*{propositiontwostar}{Proposition 2$^\ast$}
{
\begin{propositiontwostar}\label{th:2prime}
Suppose we do not use payments to screen, so $p\equiv 0$. Then, if a non-constant allocation rule $x(\alpha,\beta)$ is equitable and implementable, one of the following conditions has to hold for some $\eta^* \in (\underline \eta,\overbar \eta)$:
\begin{enumerate}
    \item There exist $x\in [0,1]$, $q \in \mathbb{R}_+$ and $k \in \mathbb{R}$ such that for all types $(\alpha,\beta)$ with merit $\eta(\alpha,\beta)=\eta^*$,
    \[
    \frac{v_x(\beta,x)}{z_q(\alpha,q)} = k.
    \]
    \item There exist $x^a \neq x^b\in [0,1]$, $q^a,q^b \in \mathbb{R}_+$ such that for all types $(\alpha,\beta)$ with merit $\eta(\alpha,\beta)=\eta^*$,
    \[
    v(\beta,x^a) - z(\alpha,q^a) = v(\beta,x^b) - z(\alpha,q^b).
    \]
\end{enumerate}
\end{propositiontwostar}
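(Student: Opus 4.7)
The plan is to mirror the structure of the proof of Proposition 2, adapted to the non-linear utility specification. First I would, as in that proof, argue that any equitable implementable mechanism can be taken in the form $q(\alpha,\beta) \equiv \hat{q}(\eta(\alpha,\beta))$ with $\hat{q}$ weakly increasing in merit, and that $\hat{q}$ must strictly increase whenever $\hat{x}$ does (otherwise some type could weakly increase $x$ while weakly decreasing $q$, violating IC). Then for every agent $(\alpha,\beta)$, IC reduces to the statement that $\eta^{*}=\eta(\alpha,\beta)$ maximizes $\phi_{\alpha,\beta}(\eta) := v(\beta,\hat{x}(\eta)) - z(\alpha,\hat{q}(\eta))$ over $\eta\in(\underline{\eta},\overbar{\eta})$.

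Next, using that $x$ (and hence $\hat{x}$) is non-constant, I would split into two cases based on whether $\hat{x}$ is continuous. Suppose first that $\hat{x}$ has a jump discontinuity at some $\eta^{*}$, and let $(x^a,q^a)$ and $(x^b,q^b)$ denote the one-sided limits of $(\hat{x},\hat{q})$ at $\eta^{*}$ from below and above. The IC inequalities for types with merit just below and just above $\eta^{*}$ yield opposite weak preferences between $(x^a,q^a)$ and $(x^b,q^b)$; passing to the limit via continuity of $v$ and $z$ in $(\alpha,\beta)$ shows that every type with $\eta(\alpha,\beta)=\eta^{*}$ must be indifferent between the two points, giving condition 2.

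In the remaining case, $\hat{x}$ is continuous and non-constant, so there is some interval on which $\hat{x}$ (and thus $\hat{q}$) is strictly increasing. I would select $\eta^{*}$ in the interior of such an interval at which both $\hat{x}$ and $\hat{q}$ are differentiable and $\hat{x}'(\eta^{*})>0$, invoking Lebesgue's theorem on differentiability of monotone functions. The first-order condition $\phi_{\alpha,\beta}'(\eta^{*})=0$ for every $(\alpha,\beta)$ with $\eta(\alpha,\beta)=\eta^{*}$ then reads
\[
v_x(\beta,\hat{x}(\eta^{*}))\,\hat{x}'(\eta^{*}) \;=\; z_q(\alpha,\hat{q}(\eta^{*}))\,\hat{q}'(\eta^{*}),
\]
which rearranges to condition 1 with $x=\hat{x}(\eta^{*})$, $q=\hat{q}(\eta^{*})$, and $k=\hat{q}'(\eta^{*})/\hat{x}'(\eta^{*})$.

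The step I expect to be most delicate is locating the "nice" $\eta^{*}$ in the continuous case: a continuous monotone non-constant function can in principle be pathological (e.g., Cantor-like) and may fail to have positive derivative at any differentiability point. I expect this obstacle can be dispatched by noting that IC and equity together rule out such pathologies: on any interval where $\hat{x}$ is constant, IC forces $\hat{q}$ to be constant too, so the image menu $\{(\hat{x}(\eta),\hat{q}(\eta))\}$ must either be locally a strictly increasing smooth curve (to which the argument of the continuous case applies) or decompose into a countable set of points separated by jumps (to which the argument of the jump case applies), exhausting all possibilities.
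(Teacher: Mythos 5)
Your case decomposition (jump in $\hat x$ versus continuous $\hat x$) and your treatment of the jump case match the paper's proof: there, continuity of indirect utility across the discontinuity forces every type on the iso-merit curve to be indifferent between the two one-sided limit bundles, which is exactly condition 2. The gap is in the continuous case, and it sits precisely at the step you flag as delicate. First, a continuous, non-constant, weakly increasing $\hat x$ need not be strictly increasing on any interval (the Cantor function is constant on a dense open set of full measure), so ``select $\eta^{*}$ in the interior of such an interval'' may be vacuous. Second, even where Lebesgue differentiability holds, a singular strictly increasing $\hat x$ has $\hat x'=0$ at every point of differentiability, so no $\eta^{*}$ with $\hat x'(\eta^{*})>0$ need exist and your first-order condition degenerates to $0=0$. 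Your proposed repair is a false dichotomy: ``locally a strictly increasing smooth curve'' versus ``countable set of points separated by jumps'' does not exhaust the continuous singular case, and neither IC nor equity obviously excludes it --- the paper never claims they do.

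The paper closes this gap by a different device: the Generalized FOC (Lemma~\ref{lem:MRS}), which requires no differentiability of $\hat x$ or $\hat q$. It shows that if there is a decreasing sequence $\delta_i\to 0$ along the diagonal with $x^a_{\delta_i}$ strictly decreasing to $x^a$, then for any $(\alpha^b,\beta^b)$ receiving the same allocation one has $v_x(\beta^b,x^a)/z_q(\alpha^b,q^a)\le v_x(\beta^a,x^a)/z_q(\alpha^a,q^a)$; the proof manipulates difference quotients of $v$ and $z$ (which are smooth by assumption) together with revealed preference, never differentiating the mechanism. Fact~\ref{fact:sequence} supplies such a sequence at some merit level for any continuous, non-constant, weakly increasing $\hat x$, and applying the lemma symmetrically to any two types on that iso-merit curve turns the one-sided inequality into the equality of condition 1. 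To rescue your FOC route you would have to prove that implementability rules out singular allocation rules, which is a substantial unproved claim; the sequence-based argument is the cleaner path.
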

}
Proposition \ref{prop:relax} also carries through \emph{verbatim}. Unfortunately, however, Theorem \ref{th:3} does not easily extend. This is for two reasons: first, the non-linear utility specification lacks a workable characterization of implementability similar to that of \cite{ROCHET}. Second, non-linearities preclude the application of a mixture argument used in the proof of the result. Nevertheless, I conjecture that a result similar to Theorem \ref{th:3} is true also in this richer setting.

\section{Discussion}\label{sec:conc}

While my approach to modeling perceived equity is highly stylized, it offers general qualitative conclusions. First, every screening instrument will bias the allocation towards the group for which this instrument is less costly---this makes screening with payments problematic from an equity standpoint. Using a different instrument (like waiting in line) could reverse this bias but the designer's control over the allocation would still be limited. Consequently, the resulting bias might still not satisfy the public's equity concerns. I show that this problem can be solved by combining multiple screening instruments which on their own favor different social groups. Doing so gives the designer freedom to tinker with various groups' differential cost of the allocated good, and therefore to improve efficiency through screening while still producing an allocation that is seen as fair. In practice, this could be achieved by combining payment-based mechanisms, such as congestion or emission charges, with possibly laborious rebate processes---ordeals poorer and more car-reliant individuals could go through to avoid the disproportionate burden of payment. Finally, I show that even once the merit function is fixed, evaluating the “degree” of equity violation is itself delicate: plausible ways of aggregating deviations from equal treatment can overturn appealing intuitions about which policies are fairer.

\appendix
\setstretch{1} 

\section{Omitted proofs}

When providing the proofs of Proposition \ref{th:1} and \ref{prop:relax} as well as related lemmas, I show them for the extended model introduced in Section \ref{sec:non-linear}. I also write $(\alpha^a_\delta,\beta^a_\delta)$ for $(\alpha^a+\delta,\beta^a+\delta)$,  $x^a_\delta$ for $x(\alpha^a_\delta,\beta^a_\delta)$ and use $p^a_\delta$, $q^a_\delta$ and $\eta^a_\delta$ analogously. I omit the subscript when $\delta=0$.

The following lemma will be useful in the proofs presented here. While it resembles a first-order condition, its proof is more involved because the type space is two-dimensional and the ordeal rule need not be piecewise differentiable.
\begin{lemma}[Generalized FOC]\label{lem:MRS}
Suppose we do not use payments to screen, so $p\equiv 0$. Fix any type $(\alpha^a,\beta^a)\in \Theta$ and assume there exists a decreasing sequence $\{\delta_i\}_i$ such that $\delta_i \to 0$ for which $\{x^a_{\delta_i}\}_i$ is strictly decreasing and $x^a_{\delta_i} \to x^a$. Then, for any $(\alpha^b,\beta^b)$ such that $x^b= x^a$ we have:
\begin{equation}\label{lem:eqMRS}
        \frac{v_x(\beta^b,x^a)}{z_q(\alpha^b,q^a)} \leq \frac{v_x(\beta^a,x^a)}{z_q(\alpha^a,q^a)}.
\end{equation}
\end{lemma}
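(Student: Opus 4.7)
The plan is to convert pairs of IC constraints between nearby types into bounds on the discrete ``marginal rate of substitution'' $\Delta q_i/\Delta x_i$, and then let $i\to\infty$; the mean value theorem handles the fact that the ordeal rule $q$ need not be differentiable.

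First I would note that $x^b=x^a$ together with $p\equiv 0$ forces $q^b=q^a$: the two ICs between $(\alpha^a,\beta^a)$ and $(\alpha^b,\beta^b)$ reduce to $z(\alpha^a,q^a)\leq z(\alpha^a,q^b)$ and its symmetric analogue, and $z_q>0$ then gives $q^a=q^b$. Next, writing $\Delta x_i:=x^a_{\delta_i}-x^a>0$ (positive by the strict-decrease hypothesis together with $x^a_{\delta_i}\to x^a$) and $\Delta q_i:=q^a_{\delta_i}-q^a$, the two mutual ICs between $(\alpha^a,\beta^a)$ and $(\alpha^a_{\delta_i},\beta^a_{\delta_i})$ give
\[v(\beta^a,x^a_{\delta_i})-v(\beta^a,x^a)\;\leq\; z(\alpha^a,q^a_{\delta_i})-z(\alpha^a,q^a)\]
and
\[z(\alpha^a_{\delta_i},q^a_{\delta_i})-z(\alpha^a_{\delta_i},q^a)\;\leq\; v(\beta^a_{\delta_i},x^a_{\delta_i})-v(\beta^a_{\delta_i},x^a).\]
Both inequalities force $\Delta q_i>0$; the second one, combined with continuity of $v,z$ and $\Delta x_i\to 0$, lets me show $q^a_{\delta_i}\to q^a$, since any subsequential limit $q^*>q^a$ would yield $0\geq z(\alpha^a,q^*)-z(\alpha^a,q^a)>0$, a contradiction.

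I would then write the IC of type $(\alpha^b,\beta^b)$ against the bundle offered at $(\alpha^a_{\delta_i},\beta^a_{\delta_i})$; with $x^b=x^a$ and $q^b=q^a$ this reads
\[v(\beta^b,x^a_{\delta_i})-v(\beta^b,x^a)\;\leq\; z(\alpha^b,q^a_{\delta_i})-z(\alpha^b,q^a).\]
Applying the mean value theorem (in $x$ to $v$ and in $q$ to $z$) to this inequality and to the second inequality above yields points $\xi_i,\xi_i'\in(x^a,x^a_{\delta_i})$ and $\zeta_i,\zeta_i'\in(q^a,q^a_{\delta_i})$ with
\[\frac{v_x(\beta^b,\xi_i')}{z_q(\alpha^b,\zeta_i')}\;\leq\;\frac{\Delta q_i}{\Delta x_i}\;\leq\;\frac{v_x(\beta^a_{\delta_i},\xi_i)}{z_q(\alpha^a_{\delta_i},\zeta_i)}.\]
Dropping the middle term and letting $i\to\infty$, continuity of $v_x$ and $z_q$ together with $\xi_i,\xi_i'\to x^a$, $\zeta_i,\zeta_i'\to q^a$, and $\delta_i\to 0$ delivers the claimed inequality.

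The main obstacle is the intermediate convergence $q^a_{\delta_i}\to q^a$: because $q$ is not assumed differentiable or even continuous, this must be squeezed out purely from the two-sided IC constraints at nearby types, together with a mild boundedness argument (via IR and $z$ being coercive in $q$) to rule out a subsequence along which $q^a_{\delta_i}$ diverges. Once that is in hand, the rest of the argument is a clean sandwich via the mean value theorem.
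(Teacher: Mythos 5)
Your proof is correct and follows essentially the same route as the paper's: both sandwich the discrete ratio $\Delta q_i/\Delta x_i$ between the two types' difference-quotient marginal rates of substitution using the relevant IC constraints and then pass to the limit, you via the mean value theorem and the paper via difference quotients inside a proof by contradiction. The one caveat is your appeal to coercivity of $z$ (and IR) to bound $q^a_{\delta_i}$ --- coercivity is not assumed in the model, but it is also unnecessary: IC forces $\{q^a_{\delta_i}\}_i$ to be strictly decreasing (strictly lower allocations must come with strictly lower ordeals) and bounded below by $q^a$, which is exactly how the paper obtains convergence of this sequence.
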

\begin{proof}
Note $\{q^a_{\delta_i}\}_i$ is strictly decreasing since strictly lower allocations must come with strictly lower ordeals. I first show that $q^a_{\delta_i} \to q^a$. Indirect utility has to be continuous at $(\alpha^a,\beta^a)$, so:
\[
v(\beta^a,  x^a) - z(\alpha^a, q^a)= \lim_{i\to \infty } \left\{v(\beta^a_{\delta_i}, x^a_{\delta_i}) - z(\alpha^a_{\delta_i}, q^a_{\delta_i}) \right\}.        
\]
Note that $\lim_{i\to \infty } v(\beta^a_{\delta_i}, x^a_{\delta_i}) = v(\beta^a,  x^a)$. Since $\alpha^a_{\delta_i}\to \alpha^a$ and $z$ is continuous and strictly increasing in the latter argument, we have $q^a_{\delta_i}\to q^a$.  

Now, suppose \eqref{lem:eqMRS} fails:
\[
        \frac{v_x(\beta^b,x^a)}{z_q(\alpha^b,q^a)} > \frac{v_x(\beta^a,x^a)}{z_q(\alpha^a,q^a)}.        
\]
By continuity of $v_x$ and $z_q$, for $i$ high enough we have:
        \begin{equation}\label{eq:bbb}
                \frac{v_x(\beta^b,x^a)}{z_q(\alpha^b,q^a)} > \frac{v_x(\beta^a_{\delta_i},x^a)}{z_q(\alpha^a_{\delta_i},q^a)}.        
        \end{equation}
Since $\{x^a_{\delta_i}\}_i$ and $\{q^a_{\delta_i}\}_i$ are strictly decreasing and tend to $x^a$ and $q^a$, we have that for all $\alpha,\beta$:
\begin{equation}\label{eq:convergence}
\frac{v(\beta,x^a_{\delta_j})-v(\beta,x^a)}{x^a_{\delta_j}-x^a} \to v_x(\beta,x^a), \quad \frac{z(\alpha,q^a_{\delta_j})-z(\alpha,q^a)}{q^a_{\delta_j}-q^a} \to z_q(\alpha,q^a)    \quad \text{ as $j\to\infty$.}
\end{equation}
 Fix $i$ high enough that \eqref{eq:bbb} holds. Then, by \eqref{eq:convergence}, for $j$ high enough we have:
        \[
        \frac
        {
        \frac{v(\beta^b,x^a_{\delta_j})-v(\beta^b,x^a)}{x^a_{\delta_j}-x^a}
        }
        {
        \frac{z(\alpha^b,q^a_{\delta_j})-z(\alpha^b,q^a)}{q^a_{\delta_j}-q^a}
        }
        >
        \frac
        {
        \frac{v(\beta^a_{\delta_i},x^a_{\delta_j})-v(\beta^a_{\delta_i},x^a)}{x^a_{\delta_j}-x^a}
        }
        {
        \frac{z(\alpha^a_{\delta_i},q^a_{\delta_j})-z(\alpha^a_{\delta_i},q^a)}{q^a_{\delta_j}-q^a}.
        }
        \] 
In particular, take $j>i$ and notice that increasing $i$ to $j$ further relaxes the inequality. Thus:
        \begin{equation}\label{eq:ccc}
                \frac
                {
                \frac{v(\beta^b,x^a_{\delta_j})-v(\beta^b,x^a)}{x^a_{\delta_j}-x^a}
                }
                {
                \frac{z(\alpha^b,q^a_{\delta_j})-z(\alpha^b,q^a)}{q^a_{\delta_j}-q^a}
                }
                >
                \frac
                {
                \frac{v(\beta^a_{\delta_j},x^a_{\delta_j})-v(\beta^a_{\delta_j},x^a)}{x^a_{\delta_j}-x^a}
                }
                {
                \frac{z(\alpha^a_{\delta_j},q^a_{\delta_j})-z(\alpha^a_{\delta_j},q^a)}{q^a_{\delta_j}-q^a}.
                }     
        \end{equation}
Now, by revealed preference we have:
        \[
        v(\beta^a_{\delta_j},x^a_{\delta_j}) - z(\alpha^a_{\delta_j},q^a_{\delta_j}) \geq v(\beta^a_{\delta_j},x^a)  - z(\alpha^a_{\delta_j},q^a)  
        \quad
        \Longrightarrow 
        \quad
        \frac
        {
        \frac{v(\beta^a_{\delta_j},x^a_{\delta_j})-v(\beta^a_{\delta_j},x^a)}{x^a_{\delta_j}-x^a}
        }
        {
        \frac{z(\alpha^a_{\delta_j},q^a_{\delta_j})-z(\alpha^a_{\delta_j},q^a)}{q^a_{\delta_j}-q^a}
        }
        \geq 
        \frac{q^a_{\delta_j}-q^a}{x^a_{\delta_j}-x^a}.
        \]
Combining the latter inequality with \eqref{eq:ccc} gives:
\[
        \frac
        {
        \frac{v(\beta^b,x^a_{\delta_j})-v(\beta^b,x^a)}{x^a_{\delta_j}-x^a}
        }
        {
        \frac{z(\alpha^b,q^a_{\delta_j})-z(\alpha^b,q^a)}{q^a_{\delta_j}-q^a}
        }
        >
        \frac{q^a_{\delta_j}-q^a}{x^a_{\delta_j}-x^a}
        \quad
        \Longrightarrow
        \quad
        v(\beta^b,x^a_{\delta_j}) - z(\alpha^b,q^a_{\delta_j}) > v(\beta^b,x^a) - z(\alpha^b,q^a).
\]
Hence, $x^b\neq x^a$; contradiction.
\end{proof}

\subsection{Proof of Proposition \ref{th:2}}

Let $x(\alpha,\beta)\equiv \hat x(\eta(\alpha,\beta))$ be an equitable, implementable and non-constant allocation rule and consider two cases.

\textbf{Case 1: }$\hat x$ is discontinuous at some $\eta^a$. Indirect utility has to be continuous so, by continuity of $\eta$, $v$ and $z$, the following holds for any type $(\alpha,\beta)$ with merit $\eta^a$: 
\[
        v(\beta,  \hat x_+(\eta^a)) - z(\alpha,  \hat q_+(\eta^a)) =  v(\beta,  \hat x_-(\eta^a)) - z(\alpha,  \hat q_-(\eta^a)).            
\]
However, the existence of $\hat x_+(\eta^a),\hat x_-(\eta^a)$, $\hat q_+(\eta^a),\hat q_-(\eta^a)$ implies condition $2.$ in Proposition \ref{th:2}.

\textbf{Case 2:} $\hat x$ is continuous. The proof of this case relies on the Generalized FOC (Lemma \ref{lem:MRS}). I first show that there exists a sequence that lets us apply it.

\begin{fact}\label{fact:sequence}
There exists $\eta^a\in (\underline \eta,\overbar \eta)$ and a decreasing sequence $\{f_i\}_i$ of $f_i\in (\underline \eta,\overbar \eta)$ such that $f_i \to \eta^a$ and $\{\hat x(f_i)\}_i$ is strictly decreasing.
        \end{fact}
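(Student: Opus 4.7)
\textbf{Proof plan for Fact \ref{fact:sequence}.} The plan is to exploit the three properties of $\hat x$ available in Case 2: monotonicity (Lemma \ref{lem:inc}), continuity (the hypothesis of this case), and non-constancy (since $x(\alpha,\beta)$ is non-constant). Together these force $\hat x$ to ``descend'' to some value from the right in a way that yields the desired sequence.

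First, since $\hat x$ is non-constant and weakly increasing, there exist $\eta_1<\eta_2$ in $(\underline\eta,\overbar\eta)$ with $\hat x(\eta_1)<\hat x(\eta_2)$. I would then define
\[
\eta^a \;=\; \sup\bigl\{\eta\in[\eta_1,\eta_2]:\ \hat x(\eta)=\hat x(\eta_1)\bigr\}.
\]
The level set on the right-hand side is closed in $[\eta_1,\eta_2]$ by continuity of $\hat x$, so the supremum is attained and $\hat x(\eta^a)=\hat x(\eta_1)$. Because $\hat x(\eta_2)>\hat x(\eta_1)$ and the level set is closed, $\eta^a<\eta_2$. Moreover, for any $\eta\in(\eta^a,\eta_2]$, monotonicity gives $\hat x(\eta)\ge\hat x(\eta^a)$, and $\eta\notin S$ (by the supremum), so actually $\hat x(\eta)>\hat x(\eta^a)$.

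Next, I would pick any strictly decreasing sequence $a_i\in(\eta^a,\eta_2]$ with $a_i\to\eta^a$. By monotonicity of $\hat x$, $\{\hat x(a_i)\}$ is weakly decreasing; by the previous paragraph every term strictly exceeds $\hat x(\eta^a)$; and by continuity $\hat x(a_i)\to\hat x(\eta^a)$. A weakly decreasing sequence that converges to a value strictly below all its terms cannot be eventually constant, so I can recursively extract a subsequence $f_n:=a_{i_n}$ along which the values $\hat x(f_n)$ strictly decrease: at each step, convergence of $\hat x(a_i)$ to $\hat x(\eta^a)<\hat x(f_n)$ guarantees some later index $i_{n+1}>i_n$ with $\hat x(a_{i_{n+1}})<\hat x(f_n)$. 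The resulting $\{f_n\}$ is decreasing, $f_n\to\eta^a$, and $\{\hat x(f_n)\}$ is strictly decreasing, as required.

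The main obstacle is purely a matter of choosing the correct reference point: once $\eta^a$ is defined as the \emph{right} endpoint of a level set of $\hat x$, the rest is routine use of continuity and monotonicity. A naive attempt (e.g.\ picking any point of strict monotonicity) need not work because $\hat x$ can be locally constant on the right of an arbitrary point; the supremum construction above is what guarantees that every right-neighborhood of $\eta^a$ contains points with strictly larger $\hat x$-values, which is precisely what makes the subsequence extraction possible.
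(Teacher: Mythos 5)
Your proposal is correct and follows essentially the same route as the paper: define $\eta^a$ as the supremum of the level set of $\hat x$ at a value where $\hat x$ is about to increase, use continuity to get $\hat x(\eta^a)=\hat x(\eta_1)$, approach $\eta^a$ from the right, and extract a strictly decreasing subsequence of values from the resulting weakly decreasing sequence converging to a strictly smaller limit. Your version is, if anything, slightly more careful than the paper's in restricting the supremum to $[\eta_1,\eta_2]$ and in spelling out the subsequence extraction.
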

        \begin{proof}
Recall that $\hat x$ is not constant. By Lemma \ref{lem:inc} it is also weakly increasing, so there exist $\eta^b,\eta^c\in (\underline \eta,\overbar \eta)$ such that $\eta^b<\eta^c$ and $\hat x(\eta^b) <  \hat x(\eta^c)$. Let $\eta^a = \sup\{\eta: \hat x(\eta) = \hat x(\eta^b)\}$. Since $\hat x$ is continuous, $\hat x(\eta^a) = \hat x(\eta^b)$. Now, take any decreasing sequence $\{e_i \}_i$ such that for every $i$, $e_i \in (\underline \eta,\overbar \eta)$ and $e_i \to \eta^a$. Since $\hat x$ is weakly increasing, $\hat x(e_i)\leq \hat x(e_j)$ whenever $i>j$. Also, by continuity of $\hat x$, $\hat x(e_i)\to \hat x(\eta^a)$ and, by the construction of $\eta^a$, $\hat x(\eta^a)< \hat x(e_i)$ for every $i$. Since $\hat x$ is continuous, there exists a subsequence $\{f_i\}_i$ of $\{e_i \}_i$ for which $\hat x(f_i)$ is decreasing strictly.
        \end{proof}
Take $\eta^a$, $\{f_i\}_i$ from the statement of Fact \ref{fact:sequence} and consider any $(\alpha^b,\beta^b)$ with merit $\eta^a$. Since $f_i \to \eta^a$ and $\hat x$ is continuous, $\hat x(f_i)\to \hat x(\eta^a)$. Then, by continuity and monotonicity of $\eta(\alpha,\beta)$, for all $i$ high enough there exist $\delta^b_i$ such that $f_i =\eta(\alpha^b_{\delta_i},\beta^b_{\delta_i})$, $\delta^b_i \to 0$ as $f_i \to \eta^a$, and $\{\delta^b_i\}_i$ is decreasing. Since $(\alpha^b,\beta^b)$ has the same allocation $\hat x(\eta^a)$ as $(\alpha^a,\beta^a)$, Lemma \ref{lem:MRS} gives:
\[
\frac{v_x(\beta^a,x^a)}{z_q(\alpha^a,q^a)} \leq \frac{v_x(\beta^b,x^a)}{z_q(\alpha^b,q^a)} .
\]
However, notice that such a sequence $\{\delta^a_i\}_i$ can also be found for type $(\alpha^a,\beta^a)$, and thus Lemma \ref{lem:MRS} gives the reverse inequality too. Hence, all types with merit $\eta^a$ must the have same value of $v_x(\beta,x^a)/z_q(\alpha,x^a)$, which implies condition $1.$ in Proposition \ref{th:2}.

\subsection{Proof of Lemma \ref{lem:new}}

We first show $\tilde \eta(\kappa,\lambda)$ is twice continuously differentiable on $\mathbb{R}\times \mathbb{R}_{--}$, and that $\tilde \eta_\kappa(\kappa,\lambda)$ is positive and uniformly bounded away from zero on $\mathbb{R}\times [\underline \lambda,\overbar \lambda]$. Note $-\frac{z(\alpha)}{w(\alpha)}$ is strictly increasing and has a twice continuously differentiable inverse $f:\mathbb{R}_{--}\to\mathbb{R}$. Thus:
\[
\tilde \eta(\kappa, \lambda) = \eta\left(f(\lambda), \ \kappa \cdot w(f(\lambda))\right).
\]
Calculation confirms that $\tilde \eta(\kappa,\lambda)$ is twice continuously differentiable.

I now show that $\tilde \eta_\kappa(\kappa,\lambda)$ is positive and bounded away from zero on $\mathbb{R}\times [\underline \lambda,\overbar \lambda]$. Notice:
\[
\tilde \eta_\kappa(\kappa, \lambda) = \eta_\beta\left(f(\lambda), \ \kappa \cdot w(f(\lambda))\right)\cdot w(f(\lambda)).
\]
$\eta_\beta(\alpha,\beta)$ is strictly positive and uniformly bounded away from zero on $\mathbb{R}^2$ by point $2.$ of Assumption \ref{ass:tech}. $w(f(\lambda))$ is strictly positive everywhere. Moreover, $w$ and $f$ are continuous, so $w(f(\lambda))$ attains its (strictly positive) minimum on $[\underline \lambda,\overbar \lambda]$. Thus, $\tilde \eta_\kappa(\kappa,\lambda)$ is positive and uniformly bounded away from zero on $\mathbb{R}\times [\underline \lambda,\overbar \lambda]$.

Now, define $\kappa^*:\mathbb{R}_{--}\to\mathbb{R}$ such that:
\begin{equation}\label{eq:kappadef}
        \eta^* = \tilde \eta (\kappa^*(\lambda),\lambda).
\end{equation}
That is, $\kappa^*(\lambda)$ gives the value of $\kappa$ for which type $(\kappa,\lambda)$ attains the threshold merit level $\eta^*$ (see Figure \ref{fig:extrapol}). Notice $\kappa^*(\lambda)$ is well-defined everywhere on $\mathbb{R}_{--}$. This is because $\tilde \eta_\kappa(\kappa,\lambda)$ is positive and uniformly bounded away from zero so, for every $\lambda^*\in \mathbb{R}_{--}$, $\tilde \eta (\kappa,\lambda^*)$ attains every value in $\mathbb{R}$ for some $\kappa$.

Since $\tilde \eta(\kappa,\lambda)$ is twice continuously differentiable and $\tilde \eta(\kappa,\lambda)>0$, implicitly differentiating \eqref{eq:kappadef} tells us that $\kappa^*(\lambda)$ is twice continuously differentiable too. Thus, the following bounds are finite:
\[
        M_1:=  \max_{\lambda \in [\underline{\lambda}, \overline{\lambda}]} |\kappa^{*\prime}(\lambda)|, \quad M_2:=  \max_{\lambda \in [\underline{\lambda}, \overline{\lambda}]} |\kappa^{*\prime\prime}(\lambda)|.
\]

Having defined the necessary objects, we can now proceed to proving Lemma \ref{lem:new}. Consider first the case where the threshold rule allocates the good to agents with merit at the threshold: $\hat x(\eta^*)=0$ (the case of $\hat x(\eta^*)=1$ will be analogous). Take the following ordeal rule, where $\psi,\zeta$ are positive constants:
\begin{equation}\label{eq:lambda}
        \tilde q(\kappa,\lambda) =
        \begin{cases}
                \psi - \zeta\cdot \overbar \lambda + \zeta\cdot \lambda  - \kappa^{*\prime}(\lambda) & \text{if $\tilde \eta(\kappa,\lambda)> \eta^*$},\\
                \psi - \zeta\cdot \overbar \lambda +\zeta \cdot \lambda & \text{otherwise}.
        \end{cases}
\end{equation}
Let the indirect utility function be:
\begin{equation}\label{eq:uform2}
        V(\kappa,\lambda) = \max[0, \ \kappa - \kappa^*(\lambda)] + \zeta\cdot \frac{\lambda^2}{2} + \lambda \cdot(\psi - \zeta\cdot \overbar \lambda).
\end{equation}
I will show this ordeal rule and indirect utility function satisfy the conditions of Lemma \ref{lem:new} for some choice of constants $\psi,\zeta$. First, notice that since $\kappa^*(\lambda)$ is continuous, $V(\kappa,\lambda)$ is continuous too. Thus, $V(\kappa,\lambda)$ is bounded on the extended type space. Notice also that setting $\psi \geq M_1+\zeta \cdot | \overline \lambda - \underline \lambda|$ guarantees that the ordeal rule $\tilde q(\kappa,\lambda)$ is positive on the extended type space.

It remains to show we can select constants to make $V(\kappa,\lambda)$ convex and such that the allocation and ordeal rule pair belongs to its subdifferential everywhere. To that end, I establish the monotonicity of the gradient $\nabla V(\kappa,\lambda)$, wherever it exists. Notice also that this gradient is equal to $[\hat x(\tilde \eta(\kappa,\lambda)) , \tilde q(\kappa,\lambda)]$.

\begin{fact}\label{fact:gradmono}
There exists $\zeta$ such that for any $(\kappa_1,\lambda_1), (\kappa_2,\lambda_2)\in [\underline{\kappa}, \overline{\kappa}] \times [\underline{\lambda}, \overline{\lambda}]$ where the gradient $\nabla V$ exists, we have:
        \begin{equation}\label{eq:convnondiffcond}
        \Delta \kappa \cdot \Delta \tilde x + \Delta \lambda \cdot \Delta \tilde q \geq 0,
        \end{equation}
        where
        $\Delta \kappa := \kappa_2 - \kappa_1, \ \Delta \lambda := \lambda_2 - \lambda_1, \ \Delta \tilde x := \tilde x(\kappa_2,\lambda_2) - \tilde x(\kappa_1,\lambda_1)$ and $\Delta \tilde q := \tilde q(\kappa_2,\lambda_2) -\tilde q(\kappa_1,\lambda_1)$.
\end{fact}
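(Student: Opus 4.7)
The plan is to verify the monotonicity inequality \eqref{eq:convnondiffcond} by a three-way case split based on where the two points sit relative to the threshold curve $\{\kappa = \kappa^*(\lambda)\}$, and to show that any $\zeta \geq M_2$ works. First I would handle the two ``same side'' cases. If both points have $\tilde\eta < \eta^*$, then $\Delta \tilde x = 0$ and $\tilde q$ reduces to the affine function $\zeta \lambda + (\psi - \zeta\overbar\lambda)$, so \eqref{eq:convnondiffcond} becomes $\zeta (\Delta \lambda)^2 \geq 0$. If both have $\tilde\eta > \eta^*$, then $\Delta \tilde x = 0$ still and $\Delta \tilde q = \zeta \Delta \lambda - \bigl(\kappa^{*\prime}(\lambda_2) - \kappa^{*\prime}(\lambda_1)\bigr)$; applying the mean value theorem to $\kappa^{*\prime}$ writes this difference as $\kappa^{*\prime\prime}(\tilde\lambda)\Delta\lambda$ for some intermediate $\tilde\lambda$, so the left-hand side of \eqref{eq:convnondiffcond} becomes $\bigl(\zeta - \kappa^{*\prime\prime}(\tilde\lambda)\bigr)(\Delta\lambda)^2 \geq (\zeta - M_2)(\Delta\lambda)^2 \geq 0$.

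The principal step is the ``crossing'' case, which drives the choice of $\zeta$. Without loss of generality, suppose $\tilde\eta(\kappa_1,\lambda_1) < \eta^*$ and $\tilde\eta(\kappa_2,\lambda_2) > \eta^*$, so that $\kappa_1 < \kappa^*(\lambda_1)$, $\kappa_2 > \kappa^*(\lambda_2)$, and $\Delta \tilde x = 1$. A direct computation from \eqref{eq:lambda} gives $\Delta \tilde q = \zeta \Delta \lambda - \kappa^{*\prime}(\lambda_2)$, so the left-hand side of \eqref{eq:convnondiffcond} equals $\Delta \kappa - \kappa^{*\prime}(\lambda_2) \Delta \lambda + \zeta (\Delta \lambda)^2$. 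I would then use the strict bound $\Delta \kappa > \kappa^*(\lambda_2) - \kappa^*(\lambda_1)$, together with the second-order Taylor expansion of $\kappa^*$ about $\lambda_2$, to rewrite $\kappa^*(\lambda_2) - \kappa^*(\lambda_1) = \kappa^{*\prime}(\lambda_2)\Delta\lambda - \tfrac{1}{2}\kappa^{*\prime\prime}(\tilde\lambda)(\Delta\lambda)^2$ for some intermediate $\tilde\lambda$. After cancellation of the first-order term, the remaining bound is $\bigl(\zeta - \tfrac{1}{2}\kappa^{*\prime\prime}(\tilde\lambda)\bigr)(\Delta\lambda)^2 \geq (\zeta - \tfrac{M_2}{2})(\Delta\lambda)^2 \geq 0$ for any $\zeta \geq M_2$.

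The main subtlety is this last case: it is the only one in which the allocation jump $\Delta \tilde x$ is nonzero, and therefore the only one in which the curvature of $\kappa^*$ must be absorbed into the $\zeta (\Delta \lambda)^2$ cushion. Expanding $\kappa^*$ about $\lambda_2$ rather than $\lambda_1$ is essential, since $\kappa^{*\prime}(\lambda_2)$ is precisely the slope that appears in $\Delta \tilde q$, and this matching produces the cancellation of the first-order terms. The design of $\tilde q$ in \eqref{eq:lambda} — subtracting $\kappa^{*\prime}(\lambda)$ above the threshold and building in the uniform $\zeta \lambda$ cushion below — is tailored precisely so that this estimate goes through at every crossing, regardless of the signs of $\Delta \kappa$ and $\Delta \lambda$.
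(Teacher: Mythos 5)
Your proof is correct and follows essentially the same strategy as the paper's: split into cases by whether the two points lie on the same side of the threshold curve $\kappa^*(\lambda)$, reduce the inequality to absorbing a difference involving $\kappa^{*\prime}$ into the $\zeta(\Delta\lambda)^2$ term, and bound that difference via the uniform bound $M_2$ on $|\kappa^{*\prime\prime}|$ together with $\Delta\kappa \geq \kappa^*(\lambda_2)-\kappa^*(\lambda_1)$ in the crossing case. The only (immaterial) difference is that you invoke the mean value theorem and a second-order Taylor expansion with Lagrange remainder where the paper writes the same estimates as iterated integrals of $\kappa^{*\prime\prime}$.
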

\begin{proof}
Consider three cases. 

        \textbf{Case 1:} $\Delta \lambda=0$. This case follows instantly from the fact that $\hat x$ is weakly increasing and $\tilde \eta(\kappa,\lambda)$ is increasing in $\kappa$.
        
        \textbf{Case 2:} $\Delta \lambda,\Delta \tilde x \neq 0$. Assume without loss that $\tilde x(\kappa_2,\lambda_2) = 1$. Then it has to be that $\tilde \eta(\kappa_2,\lambda_2) > \eta^* \geq \tilde \eta(\kappa_1,\lambda_1)$, so inequality \eqref{eq:convnondiffcond} becomes:
        \[
        \Delta \kappa + \Delta \lambda [\Delta \lambda \cdot \zeta - \kappa^{*\prime}(\lambda_2)] \geq 0.      
        \]
        Equivalently:
        \begin{equation}\label{eq:proof494578}
        \zeta \geq \frac{1}{(\Delta \lambda)^2}  [\Delta \lambda \cdot \kappa^{*\prime}(\lambda_2) - \Delta \kappa].
        \end{equation}                
Notice that to prove \eqref{eq:convnondiffcond} holds for all such $\Delta \kappa,\Delta \lambda$, it suffices to uniformly bound the RHS of \eqref{eq:proof494578} across them. If such a uniform bound exists, we can then simply choose $\zeta$ above it.
        
Since $\tilde \eta(\kappa_2,\lambda_2) > \eta^* \geq \tilde \eta(\kappa_1,\lambda_1)$, it has to be that $\kappa^{*}(\lambda_1)\geq \kappa_1$ and $\kappa^{*}(\lambda_2)\leq \kappa_2$ (Figure \ref{fig:bound1}). This gives the following inequality:
        \begin{equation}\label{eq:bound12}
                \Delta \kappa = \kappa_2 - \kappa_1 \geq \kappa^{*}(\lambda_2) - \kappa^{*}(\lambda_1).
        \end{equation}
        \begin{figure}[!htb]
                \centering
                \begin{tikzpicture}[scale=1] 
                    \draw[->,thick] (0,0) -- (4,0);
                    \draw[->,thick] (0,0) -- (0,4);
                    
                    \node[below left] at (4,0) {$\lambda$};
                    \node[below left] at (0,4) {$\kappa$};
                
                    \node[blue, right] at (3,0.8) {$\kappa^{*}(\lambda)$};
                    
                    \draw[blue, ultra thick] (0.3,3.7)  .. controls (0.9,3.4) and (2.2,0.7) .. (3.7,0.3);
        
                    \draw[red, ultra thick] (2.8,0.81) -- (0.8,3.2);
                    \draw[brown, ultra thick] (0.8,2.65) -- (2.8,1.35);
        
                    \draw[black, thick, dashed] (2.8,0) -- (2.8,1.35);
                    \draw[black, thick, dashed] (0.8,0) -- (0.8,3.2);
        
                    \draw[black, thick, dashed] (0,2.65) -- (0.8,2.65);
                    \draw[black, thick, dashed] (0,3.2) -- (0.8,3.2);

                    \draw[black, thick, dashed] (0,1.35) -- (2.8,1.35);
                    \draw[black, thick, dashed] (0,0.81) -- (2.8,0.81);

                    \fill[black, thick] (0.8,2.65) circle (2pt);
                    \fill[black, thick] (0.8,3.2) circle (2pt);
        
                    \fill[black, thick] (2.8,1.35) circle (2pt);
                    \fill[black, thick] (2.8,0.81) circle (2pt);
        
                    \node[below] at (2.8,0) {$\lambda_2$};
                    \node[below] at (0.8,0) {$\lambda_1$};
        
                    \node[left] at (0,1.35) {$\kappa_2$};
                    \node[left] at (0,0.8) {$\kappa^{*}(\lambda_2)$};
                    \node[left] at (0,3.2) {$\kappa^{*}(\lambda_1)$};
                    \node[left] at (0,2.65) {$\kappa_1$};
        
                \end{tikzpicture}
                \vspace{-10pt} 
                \caption{}
                \label{fig:bound1}
        \end{figure} 

        We can now use \eqref{eq:bound12} to bound the RHS of \eqref{eq:proof494578} from above:
        \begin{align*}
                \frac{1}{(\Delta \lambda)^2}  [\Delta \lambda \cdot \kappa^{*\prime}(\lambda_2) - \Delta \kappa]
                &\leq 
                \frac{1}{(\Delta \lambda)^2}  [\Delta \lambda \cdot \kappa^{*\prime}(\lambda_2) - (\kappa^{*}(\lambda_2) - \kappa^{*}(\lambda_1))]\\
        &=\frac{1}{(\Delta \lambda)^2}  \left[\Delta \lambda \cdot \kappa^{*\prime}(\lambda_2) - \int_{\lambda_1}^{\lambda_2} \kappa^{*\prime}(\tau) \ d\tau \right]\\
        &=\frac{1}{(\Delta \lambda)^2}  \left[\Delta \lambda \cdot \kappa^{*\prime}(\lambda_2) - 
        \int_{\lambda_1}^{\lambda_2}\left(
        \kappa^{*\prime}(\lambda_2) - \int_{\tau}^{\lambda_2} \kappa^{* \prime \prime}(\nu)  d\nu \right)
         \ d\tau \right]\\
         &= \frac{1}{(\Delta \lambda)^2} \int_{\lambda_1}^{\lambda_2} \int_{\tau}^{\lambda_2} \kappa^{* \prime \prime}(\nu) d\nu d\tau\\
         &\leq  \frac{1}{(\Delta \lambda)^2}  \frac{(\Delta \lambda)^2}{2}M_2 = \frac{M_2}{2}.
        \end{align*}
        Thus, setting $\zeta > M_2/2$ ensures that \eqref{eq:convnondiffcond} holds wherever $\nabla V(\kappa,\lambda)$ exists.
        
        \textbf{Case 3:} $\Delta \lambda \neq 0, \Delta \tilde x=0$. Since $\zeta \geq 0$, the case where $\tilde x(\kappa_2,\lambda_2)=\tilde x(\kappa_1,\lambda_1)=0$ is trivial. If $\tilde x(\kappa_2,\lambda_2)=\tilde x(\kappa_1,\lambda_1)=1$, it has to be that $\tilde \eta(\kappa_2,\lambda_2), \tilde \eta(\kappa_1,\lambda_1)>\eta^*$, and so \eqref{eq:convnondiffcond} becomes:
        \[
        \Delta \lambda [\Delta \lambda \cdot \zeta - (\kappa^{*\prime}(\lambda_2) -\kappa^{*\prime}(\lambda_1) )]\geq 0.
        \]
        Equivalently:
        \begin{align*}
        \zeta &\geq \frac{1}{(\Delta \lambda)^2}\Delta \lambda [\kappa^{*\prime}(\lambda_2) -\kappa^{*\prime}(\lambda_1) ]\\
        & = \frac{1}{(\Delta \lambda)^2}\Delta \lambda \int_{\lambda_1}^{\lambda_2} \kappa^{*\prime\prime}(\tau) d\tau.
        \end{align*}
        We can now uniformly bound the RHS from above as follows:
        \[
        \frac{1}{(\Delta \lambda)^2}\Delta \lambda \int_{\lambda_1}^{\lambda_2} \kappa^{*\prime\prime}(\tau) d\tau 
        \leq 
        \frac{1}{(\Delta \lambda)^2}\Delta \lambda \int_{\lambda_1}^{\lambda_2} M_2 d\tau = M_2.
        \]
        Therefore, $V(\kappa,\lambda)$ is convex whenever $\zeta > M_2$. 
\end{proof}

Now, notice that $\nabla V(\kappa,\lambda)$ exists almost everywhere on $[\underline \kappa,\overbar \kappa]\times[\underline \lambda,\overbar\lambda]$, the only exception being the points on the curve $\kappa^*(\lambda)$. Fix a $\zeta$ described in Fact \ref{fact:gradmono}. I will now show that, whenever it exists, the gradient of $V(\kappa,\lambda)$ is also its subgradient. 

Take any $(\kappa_2,\lambda_2)$ in the extended type space and any $(\kappa_1,\lambda_1)$ in the extended type space such that $\nabla V(\kappa_1,\lambda_1)$ exists. Consider the following parametrization of $V(\kappa,\lambda)$:
\[
W(t) := V(\kappa_1+ t(\kappa_2-\kappa_1) ,\lambda_1 + t(\lambda_2-\lambda_1)). 
\]
Then $W(0)= V(\kappa_1,\lambda_1)$ and $W(1)= V(\kappa_2,\lambda_2)$. Since $V(\kappa,\lambda)$ is absolutely continuous, $W'(t)$ exists almost everywhere and is given by:
\[
W'(t) = \nabla V(\kappa_1 + t\Delta \kappa, \lambda_1+t\Delta \lambda )\cdot [\Delta \kappa,\Delta \lambda],
\]
where $\Delta \kappa := \kappa_2 - \kappa_1$ and $\Delta \lambda := \lambda_2 - \lambda_1$. Now, observe that:
\[
t(W'(t) - W'(0)) = [\nabla V(\kappa_1 + t\Delta \kappa, \lambda_1+t\Delta \lambda ) - \nabla V(\kappa_1, \lambda_1)]\cdot[(\kappa_1 + t\Delta\kappa)-\kappa_1,(\lambda_1 + t\Delta \lambda)-\lambda_1].
\]
Since $\nabla V(\kappa,\lambda) = [\tilde x(\kappa,\lambda), \tilde q(\kappa,\lambda)]$, Fact \ref{fact:gradmono} implies that $W'(t) - W'(0)\geq 0$.

Now, since $V(\kappa,\lambda)$ is absolutely continuous, we have:
\begin{align*}
V(\kappa_2,\lambda_2) = W(1) &= W(0) +\int_0^1 W'(t) dt\\
&\geq W(0) +\int_0^1 W'(0) dt\\
&\geq W(0) +\nabla V(\kappa_1, \lambda_1)\cdot [\Delta \kappa,\Delta \lambda]\\
&=V(\kappa_1,\lambda_1) + [\tilde x(\kappa_1,\lambda_1), \tilde q(\kappa_1,\lambda_1)]\cdot [\Delta \kappa,\Delta \lambda].
\end{align*}
Thus, $[\tilde x(\kappa_1,\lambda_1), \tilde q(\kappa_1,\lambda_1)]$ is indeed the subgradient of $V(\kappa,\lambda)$ almost everywhere on $[\underline \kappa,\overbar \kappa]\times [\underline \lambda,\overbar\lambda]$, the exception being the curve $\kappa^*(\lambda)$. However, by continuity of $V(\kappa,\lambda)$, $[0,\psi - \zeta\cdot \overbar \lambda +\zeta \cdot \lambda]$ also belongs to the subdifferential of $V(\kappa,\lambda)$ there, and so $[\tilde x(\kappa_1,\lambda_1), \tilde q(\kappa_1,\lambda_1)]$ belongs to the subdifferential of $V(\kappa,\lambda)$ everywhere on $[\underline \kappa,\overbar\kappa]\times [\underline \lambda,\overbar \lambda]$. Finally, since $V(\kappa,\lambda)$ has a monotone subgradient at every point of its domain, it is convex.

The proof for the case where $\hat x(\eta^*)=1$ is analogous, except the ordeal rule $\tilde q(\kappa,\lambda)$ is set to $\psi - \zeta\cdot \overbar \lambda +\zeta \cdot \lambda - \kappa^{*\prime}(\lambda)$ along the curve $\kappa^*(\lambda)$.

\subsection{Proof of Proposition \ref{prop:relax}}

The proof consists of two steps. First, I establish a lower bound on the global equity violations of all non-constant allocation rules implemented with only payments. To that end, I consider a type around which the allocation is not locally constant and show that all types with the same allocation as this one must lie below some differentiable curve. I establish that these curves are downwards-sloping when only ordeals are used to screen and upwards-sloping when only payments are used. I then derive expressions for their slopes and use them to bound from below the global equity violations of all allocation rules implemented with only payments.

I then assume that iso-merit curves are sufficiently flat and construct an allocation rule implemented with only ordeals whose global equity violation is below the aforementioned lower bound.

I now proceed with the first step. To that end, consider some non-constant allocation rule $x$ implemented only with ordeals (the case where we screen only with payments is analogous). I first prove a few facts about such allocation rules.

\begin{fact}[Monotonicity]\label{fact:qdom}
If $\alpha^a \geq \alpha^b$ and $\beta^a \geq \beta^b$ with at least one inequality holding strictly, then type $(\alpha^a,\beta^a)$ gets a weakly higher allocation than type $(\alpha^b,\beta^b)$: $x^a \geq x^b$.
\end{fact}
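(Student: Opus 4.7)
The plan is to argue by contradiction: suppose $x^a<x^b$ and derive an inconsistency from the two \eqref{eq:IC} constraints at $(\alpha^a,\beta^a)$ and $(\alpha^b,\beta^b)$. Since the appendix works in the nonlinear model of Section \ref{sec:non-linear} with $p\equiv 0$, these constraints read
\[
v(\beta^a,x^a)-z(\alpha^a,q^a)\ \geq\ v(\beta^a,x^b)-z(\alpha^a,q^b), \qquad v(\beta^b,x^b)-z(\alpha^b,q^b)\ \geq\ v(\beta^b,x^a)-z(\alpha^b,q^a).
\]

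First I would dispose of the easy cases using only the first IC. If $q^a\geq q^b$, then $z(\alpha^a,q^a)-z(\alpha^a,q^b)\geq 0$ by $z_q>0$, so the first IC forces $v(\beta^a,x^a)\geq v(\beta^a,x^b)$; since $v_x>0$ this contradicts $x^a<x^b$. Hence I may assume $q^a<q^b$ in the remainder of the argument.

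The main step is the standard Rochet summation. Adding the two IC's and rearranging into a cross-difference form yields
\[
[v(\beta^a,x^a)-v(\beta^a,x^b)]-[v(\beta^b,x^a)-v(\beta^b,x^b)]\ \geq\ [z(\alpha^a,q^a)-z(\alpha^a,q^b)]-[z(\alpha^b,q^a)-z(\alpha^b,q^b)].
\]
Rewriting each side as a double integral of a mixed partial gives
\[
-\int_{\beta^b}^{\beta^a}\!\int_{x^a}^{x^b} v_{\beta x}(\beta,x)\,dx\,d\beta \ \geq\ -\int_{\alpha^b}^{\alpha^a}\!\int_{q^a}^{q^b} z_{\alpha q}(\alpha,q)\,dq\,d\alpha.
\]
By $v_{\beta x}>0$ combined with $\beta^a\geq\beta^b$ and $x^a<x^b$, the left-hand side is nonpositive and is strictly negative when $\beta^a>\beta^b$. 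By $z_{\alpha q}<0$ combined with $\alpha^a\geq\alpha^b$ and $q^a<q^b$, the right-hand side is nonnegative and is strictly positive when $\alpha^a>\alpha^b$.

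The hypothesis of Fact \ref{fact:qdom} supplies that at least one of $\alpha^a>\alpha^b,\beta^a>\beta^b$ is strict, so one of the two sign conclusions above holds strictly; together they give $\text{LHS}<\text{RHS}$, contradicting the displayed Rochet-sum inequality. I expect the main obstacle to be precisely the case $q^a<q^b$: it is the only situation where neither IC on its own delivers the contradiction, and one must exploit both IC constraints together together with both single-crossing conditions $v_{\beta x}>0$ and $z_{\alpha q}<0$ in tandem. The rest of the argument is routine sign-chasing.
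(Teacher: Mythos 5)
Your proof is correct and follows essentially the same route as the paper's: both first reduce to the case $q^a<q^b$ and then exploit the two single-crossing conditions $v_{\beta x}>0$, $z_{\alpha q}<0$ with at least one strict comparison to contradict incentive compatibility. The only cosmetic difference is that you add the two IC constraints and sign the resulting cross-differences via double integrals, whereas the paper transports type $b$'s IC inequality to type $a$ using increasing differences and contradicts $a$'s IC directly.
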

\begin{proof}
        Suppose that $x^a < x^b$. Then $q^a<q^b$ or else both types would strictly prefer $(x^b,q^b)$. By revealed preference:
        \[
        v(\beta^b,x^b) - z(\alpha^b,q^b) \geq v(\beta^b,x^a) - z(\alpha^b,q^a)  
        \quad
        \Longrightarrow
        \quad
        v(\beta^b,x^b) - v(\beta^b,x^a) \geq z(\alpha^b,q^b) - z(\alpha^b,q^a).
        \]
        By strictly increasing differences we get:
        \[
        v(\beta^a,x^b) - v(\beta^a,x^a) > z(\alpha^a,q^b) - z(\alpha^a,q^a)
        \quad
        \Longrightarrow
        \quad
        v(\beta^a,x^b) - z(\alpha^a,q^b) > v(\beta^a,x^a) - z(\alpha^a,q^a).
        \]
        That is, $(\alpha^a,\beta^a)$ prefers $(x^b,q^b)$ to $(x^a,q^a)$; contradiction.    
\end{proof}
In particular, Fact \ref{fact:qdom} tells us that $x_\delta^a$ is weakly increasing in $\delta$ for every $(\alpha^a,\beta^a)$. Moreover, whenever $x^a >x^b$ it has to be that $q^a>q^b$ or else $(\alpha^b,\beta^b)$ would prefer $(x^a,q^a)$ to her allocation. Therefore, $q^a_\delta$ is also weakly increasing in $\delta$ for every $(\alpha^a,\beta^a)$.

The following fact proves the existence of a sequence that lets us apply the Generalized FOC (Lemma \ref{lem:MRS}) to some type $(\alpha^a,\beta^a)$.

\begin{fact}\label{lem:notconst}
There exists $(\alpha^a,\beta^a)$ such that either $a)$ for every $\delta>0$, $x^a_\delta>x^a$, or $b)$ for every $\delta<0$, $x^a_\delta<x^a$.
\end{fact}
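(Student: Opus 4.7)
The plan is to argue by contradiction: suppose no type satisfying the Fact exists in $\Theta$. The first step is to extract a strong local consequence from this negation. Fixing any $(\alpha^a,\beta^a)\in\Theta$, the failure of case $(a)$ yields some $\delta^+>0$ with $x^a_{\delta^+}\le x^a$, and the failure of case $(b)$ yields some $\delta^-<0$ with $x^a_{\delta^-}\ge x^a$. Combined with Fact \ref{fact:qdom} (which provides the reverse weak inequalities), both must be equalities; monotonicity of $\delta\mapsto x^a_\delta$ then forces $x$ to be constant equal to $x^a$ on the whole diagonal segment $\{(\alpha^a+t,\beta^a+t):t\in[\delta^-,\delta^+]\}$. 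Hence $x$ is locally constant in the diagonal direction at every point of $\Theta$.

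The second step promotes this to global constancy along each diagonal. Since $\Theta$ is open and convex, each diagonal line meets $\Theta$ in a connected (open) interval, and a locally constant function on a connected set is globally constant. So $x$ is constant along each diagonal of $\Theta$, which lets me write $x(\alpha,\beta)\equiv F(\beta-\alpha)$ for some $F$ defined on the open interval $J=\{\beta-\alpha:(\alpha,\beta)\in\Theta\}$.

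The final step invokes Fact \ref{fact:qdom} once more, but now in each coordinate separately. Around any $(\alpha,\beta)\in\Theta$, perturbing only $\beta$ shows $F$ is weakly increasing near $\beta-\alpha$, while perturbing only $\alpha$ shows $F$ is weakly decreasing there, so $F$ is locally constant on $J$. Connectedness of $J$ (another consequence of $\Theta$ being convex) then yields $F$ globally constant, so $x$ is constant on $\Theta$, contradicting the non-constancy hypothesis.

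I expect the main obstacle to be the second step: the passage from local constancy along each diagonal to global constancy along each diagonal relies crucially on the convexity of $\Theta$ to make the diagonal slices connected; without this, the argument would stall at a purely local statement. Everything else is careful bookkeeping with Fact \ref{fact:qdom} and the already-noted monotonicity of $\delta\mapsto x^a_\delta$.
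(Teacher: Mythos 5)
Your proof is correct, and it shares the paper's skeleton---argue by contradiction, use the negation plus the monotonicity in Fact \ref{fact:qdom} to force $x^a_\delta=x^a$ on a diagonal segment around every point, and finish with a connectedness argument---but the middle of the argument is organized differently. The paper goes straight from diagonal local constancy to \emph{two-dimensional} local constancy: any $(\alpha^b,\beta^b)$ in a small square around $(\alpha^a,\beta^a)$ is sandwiched, via Fact \ref{fact:qdom}, between the two diagonal corner points $(\alpha^a-\epsilon,\beta^a-\epsilon)$ and $(\alpha^a+\epsilon,\beta^a+\epsilon)$, which share the value $x^a$; connectedness of $\Theta$ then gives global constancy in one step. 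You instead first globalize along each diagonal (using convexity of $\Theta$ to make the diagonal slices connected), reduce to a one-variable function $F(\beta-\alpha)$, and then kill $F$ by noting that coordinate-wise monotonicity from Fact \ref{fact:qdom} makes $F$ simultaneously weakly increasing and weakly decreasing. Both routes are valid; the paper's sandwich is slightly more economical because it never needs the functional form $x\equiv F(\beta-\alpha)$ and uses only connectedness of $\Theta$ rather than convexity of its diagonal and difference slices, while your version makes the role of the two coordinate directions more explicit. The one place to tighten your write-up is the final step: ``$F(t+\epsilon)\ge F(t)$ and $F(t-\epsilon)\ge F(t)$'' at a single $t$ only says $t$ is a local minimum, so you should phrase the monotonicity as a statement about arbitrary nearby pairs $s_1<s_2$ (obtained by perturbing a representative point of $s_1$, which openness of $\Theta$ permits); with that reading the two monotonicities do force local, hence global, constancy of $F$.
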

\begin{proof}
Suppose otherwise; then for every $(\alpha^a,\beta^a) \in \Theta$ there exists some $\epsilon>0$ such that $x^a_\delta=x^a$ for $\delta \in [-\epsilon,\epsilon]$. Take any $(\alpha^b,\beta^b)\in [\alpha^a-\epsilon, \alpha^a + \epsilon] \times [\beta^a-\epsilon, \beta^a + \epsilon]$ and notice that by Fact \ref{fact:qdom} we have $x(\alpha-\epsilon,\beta - \epsilon) \leq x^b \leq x(\alpha+\epsilon,\beta+\epsilon)$. However, $x(\alpha-\epsilon,\beta-\epsilon) = x(\alpha+\epsilon,\beta+\epsilon)= x^a$ so $x(\alpha,\beta)=x^a$ for all $(\alpha,\beta) \in [\alpha^a-\epsilon, \alpha^a + \epsilon] \times [\beta^a-\epsilon, \beta^a + \epsilon]$. Consequently, for any $(\alpha,\beta)\in \Theta$ there exists a neighborhood around it in which the allocation is constant. Now, take any $(\alpha^c,\beta^c), (\alpha^d,\beta^d)\in \Theta$. Since $\Theta$ is connected, there exists a continuous path between them and every point along this path has the same allocation as the points within its neighborhood. Therefore, the allocation has to be constant along the whole path, including the end-points. Since  $(\alpha^c,\beta^c)$ and $(\alpha^d,\beta^d)$ were arbitrary, the allocation is constant; contradiction. 
\end{proof}
Fix $(\alpha^a,\beta^a)$ from Fact \ref{lem:notconst} and assume $a)$ holds (the argument for $b)$ is analogous). Let $x_+^a = \lim_{\delta \to 0^+}x^a_\delta$ and $q_+^a = \lim_{\delta \to 0^+}q^a_\delta$. These one-sided limits exist as $x^a_\delta$ and $q^a_\delta$ are increasing in $\delta$.

\begin{fact}\label{fact:ineqs} If $x^a_\delta$ is right-continuous at $\delta=0$, then for every $(\alpha^b,\beta^b)$ such that $x^b = x^a$ we have:
        \begin{equation}\label{eq:cond1imeqs}
                \frac{v_x(\beta^b,x^a)}{z_q(\alpha^b,q^a)} \leq \frac{v_x(\beta^a,x^a)}{z_q(\alpha^a,q^a)}.
        \end{equation}
If $x^a_\delta$ is not right-continuous at $\delta=0$, then for every $(\alpha^b,\beta^b)$ such that $x^b = x^a$ we have:
\begin{equation}\label{eq:cond2imeqs}
        \frac{v(\beta^b,x^a_+) - v(\beta^b,x^a)}{z(\alpha^b,q^a_+) - z(\alpha^b,q^a)} 
        \leq 
        \frac{v(\beta^a,x^a_+) - v(\beta^a,x^a)}{z(\alpha^a,q^a_+) - z(\alpha^a,q^a)}.
\end{equation}
\end{fact}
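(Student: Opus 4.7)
The plan is to split the argument along the stated dichotomy: for the right-continuous case I would invoke the Generalized FOC (Lemma~\ref{lem:MRS}) directly, and for the non-right-continuous case I would combine the IC constraint for type $(\alpha^b,\beta^b)$ with continuity of indirect utility at $(\alpha^a,\beta^a)$.

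For the right-continuous case, right-continuity of $x^a_\delta$ at $\delta=0$ gives $x^a_+=x^a$. Combined with the standing hypothesis that $x^a_\delta>x^a$ for every $\delta>0$ and the monotonicity in Fact~\ref{fact:qdom}, I would extract inductively a decreasing sequence $\{\delta_i\}_i$ with $\delta_i\to 0^+$ for which $\{x^a_{\delta_i}\}_i$ is strictly decreasing and converges to $x^a$: given $\delta_i$, pick any $\delta_{i+1}\in(0,\delta_i)$ small enough that $x^a_{\delta_{i+1}}<x^a_{\delta_i}$, which is possible because $x^a_\delta\to x^a<x^a_{\delta_i}$ as $\delta\to 0^+$. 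Applying Lemma~\ref{lem:MRS} to this sequence and any $(\alpha^b,\beta^b)$ with $x^b=x^a$ then yields \eqref{eq:cond1imeqs}.

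For the non-right-continuous case, monotonicity of $x^a_\delta$ and $q^a_\delta$ in $\delta$ (Fact~\ref{fact:qdom}) guarantees that the right limits $x^a_+$ and $q^a_+$ exist, with $x^a_+>x^a$ by hypothesis; moreover $q^a_+>q^a$, since otherwise type $(\alpha^a,\beta^a)$ would strictly prefer $(x^a_+,q^a)$ to $(x^a,q^a)$ in the limit, violating IC. Continuity of the indirect utility function at $(\alpha^a,\beta^a)$ gives $v(\beta^a,x^a)-z(\alpha^a,q^a)=v(\beta^a,x^a_+)-z(\alpha^a,q^a_+)$, which rearranges to
\[
\frac{v(\beta^a,x^a_+)-v(\beta^a,x^a)}{z(\alpha^a,q^a_+)-z(\alpha^a,q^a)}=1.
\]
For any $(\alpha^b,\beta^b)$ with $x^b=x^a$, I note that identical allocations must carry identical ordeals (the standard argument already used in the proof of Proposition~\ref{th:1}), so $q^b=q^a$. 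The IC constraint that type $(\alpha^b,\beta^b)$ does not want to deviate to $(x^a_\delta,q^a_\delta)$ rearranges and, after dividing by the strictly positive denominator $z(\alpha^b,q^a_\delta)-z(\alpha^b,q^a)$, yields
\[
\frac{v(\beta^b,x^a_\delta)-v(\beta^b,x^a)}{z(\alpha^b,q^a_\delta)-z(\alpha^b,q^a)}\leq 1.
\]
Passing $\delta\to 0^+$ (the denominator stays bounded away from zero because $q^a_+>q^a$) delivers the same bound at $(x^a_+,q^a_+)$, and combining with the equality at $(\alpha^a,\beta^a)$ produces \eqref{eq:cond2imeqs}.

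I expect the delicate point to be the non-right-continuous case: one must verify $q^a_+>q^a$ before dividing, and the reduction $q^b=q^a$ must be explicitly justified via IC. Once these steps are in place, the remainder is a routine limiting argument on a standard revealed-preference inequality.
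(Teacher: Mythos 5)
Your proposal is correct and follows essentially the same route as the paper: Case 1 extracts a strictly decreasing sequence $\{x^a_{\delta_i}\}$ converging to $x^a$ and invokes the Generalized FOC (Lemma \ref{lem:MRS}), while Case 2 combines continuity of indirect utility at $(\alpha^a,\beta^a)$ with the revealed-preference inequality for $(\alpha^b,\beta^b)$ and passes to the limit. The only differences are cosmetic — you build the sequence inductively rather than by subsequence extraction, and you make explicit the steps $q^a_+>q^a$ and $q^b=q^a$ that the paper leaves implicit.
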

\begin{proof} Consider two cases.

\textbf{Case 1:} $x^a_\delta$ is right-continuous at $\delta=0$. Take a decreasing sequence $\{e_i\}_i$ such that $e_i\to 0$. Then $x^a_{e_i} \to x^a$ by right-continuity and, by case $a)$ of Fact \ref{lem:notconst}, $x^a<x^a_{e_i}$ for all $i$. Since $x^a_\delta$ is right-continuous at $\delta =0$, there exists a subsequence $\{f_i\}_i$ of $\{e_i \}_i$ such that $x_{f_i}^a$ is decreasing strictly. We can therefore apply the Generalized FOC (Lemma \ref{lem:MRS}), which completes the proof.

\textbf{Case 2:} $x^a_\delta$ is not right-continuous at $\delta=0$. Then $x_+^a > x^a$. Since indirect utility has to be continuous at $(\alpha^a,\beta^a)$, it has to be that $q^a_+>q^a$ and:
\begin{equation}\label{eq:eee}
        v(\beta^a,  x^a) - z(\alpha^a, q^a)= v(\beta^a, x^a_+) - z(\alpha^a, q^a_+)
        \quad
        \Longrightarrow
        \quad
        \frac{v(\beta^a, x^a_+) -v(\beta^a,  x^a) }{z(\alpha^a, q^a_+) - z(\alpha^a, q^a)} =1.
\end{equation}
Take a decreasing sequence $\{\delta_i\}_i$ such that $\delta_i\to 0$. By the fact that $x^b=x^a$ and revealed preference, for every $i$ we have:
\[
v(\beta^b,x^a)-z(\alpha^b,q^a) \geq v(\beta^b,x^a_{\delta_i})-z(\alpha^b,q^a_{\delta_i})
\quad
\Longrightarrow
\quad 
\frac{v(\beta^b,x^a_{\delta_i}) - v(\beta^b,x^a)}{z(\alpha^b,q^a_{\delta_i}) - z(\alpha^b,q^a)}\leq 1.
\]
Taking $i \to \infty$ gives:
\begin{equation}\label{eq:fff}
\frac{v(\beta^b,x^a_+) - v(\beta^b,x^a)}{z(\alpha^b,q^a_+) - z(\alpha^b,q^a)} \leq 1.   
\end{equation}
Combining the latter equality in \eqref{eq:eee} with \eqref{eq:fff} completes the proof.
\end{proof}

The LHSs of \eqref{eq:cond1imeqs} or \eqref{eq:cond2imeqs} are strictly increasing in $\alpha^b$ and $\beta^b$. Thus, to satisfy conditions \eqref{eq:cond1imeqs} or \eqref{eq:cond2imeqs} of Fact \ref{fact:ineqs}, type $(\alpha^b,\beta^b)$ must lie weakly below some differentiable curve. If $x^a_\delta$ is right-continuous at $\delta=0$, this curve is traced out by $(\alpha,\beta)$ for which:
\begin{equation}\label{eq:isomrs}
        \frac{v_x(\beta,x^a)}{z_q(\alpha,q^a)}=\frac{v_x(\beta^a,x^a)}{z_q(\alpha^a,q^a)}. 
\end{equation}
I will call this curve the \emph{iso-MRS curve at $(\alpha^a,\beta^a)$}. If $x^a_\delta$ is not right-continuous at $\delta=0$, this curve is given by:
\begin{equation}\label{eq:isodifference}
        \frac{v(\beta^b,x_+^a) - v(\beta^b,x^a)}{z(\alpha^b,q^a_+) - z(\alpha^b,q^a)}=\frac{v(\beta^a,x_+^a) - v(\beta^a,x^a)}{z(\alpha^a,q_+^a) - z(\alpha^a,q^a)}.
\end{equation}
I will call it the \emph{iso-difference curve at $(\alpha^a,\beta^a)$}.

Recall that $D(\alpha^a,\beta^a)$ from Definition \ref{def:eqviol} is the set of directions in which the allocation is locally constant at type $(\alpha^a,\beta^a)$. The above observation then tells us that for $(\alpha^a,\beta^a)$ with merit $\eta^a$ there is only one candidate element of $D(\alpha^a,\beta^a)$: depending on the right-continuity of $x^a_\delta$ at 0, it is either the direction with the slope of the iso-$MRS$ curve at $(\alpha^a,\beta^a)$ or the slope of the iso-difference curve at $(\alpha^a,\beta^a)$. These slopes are given by:
\begin{equation}\label{eq:slopeq1}
        s_{MRS}^q(\alpha^a,\beta^a) := \frac{v_x(\beta^a,x^a)}{z_q(\alpha^a,q^a)}\frac{z_{\alpha q}(\alpha^a,q^a)}{v_{\beta x}(\beta^a,x^a)},
\end{equation}
\begin{equation}\label{eq:slopeq2}
        s_{diff}^q(\alpha^a,\beta^a):= \frac{v(\beta^a,x_+^a) - v(\beta^a,x^a)}{z(\alpha^a,q_+^a) - z(\alpha^a,q^a)}
        \frac
        {z_\alpha(\alpha^a,q_+^a) - z_\alpha(\alpha^a,q^a)}
        {v_\beta(\beta^a,x_+^a) - v_\beta(\beta^a,x^a)}.
\end{equation}

Note $s_{MRS}^q(\alpha,\beta), \ s_{diff}^q(\alpha,\beta)<0$ for every $(\alpha,\beta)\in\Theta$.

Now, consider the case where we screen only with payments. Analogous reasoning pins down the only candidate elements of $D(\alpha^a,\beta^a)$ for $(\alpha^a,\beta^a)$ such that $x^a_\delta$ is locally non-constant at $\delta=0$. Then, however, $s_{MRS}^p(\alpha,\beta), \ s_{diff}^p(\alpha,\beta)>0$ for every $(\alpha,\beta)\in \Theta$.

Notice also that the slope of the iso-merit curve at $(\alpha,\beta)$ equals to $-\eta_\alpha(\alpha,\beta)/\eta_\beta(\alpha,\beta)$ and is negative for any $(\alpha,\beta)\in\Theta$. We therefore have the following lower bound on global equity violations for any allocation rule $x_p$ implemented with payments only:
\begin{equation}\label{eq:pbound}
                L(x_p) > 
                \Big| \frac{\pi}{2} - \angle\left(-\frac{\eta_\alpha(\alpha,\beta)}{\eta_\beta(\alpha,\beta)}\right) \Big|.        
\end{equation}

Having established this uniform lower bound, I proceed to the second step of the proof. I show that if iso-merit curves are sufficiently flat everywhere, I can construct an allocation rule implemented with ordeals only whose global equity violation is strictly below this lower bound.

To that end, I first establish a uniform bound on the slopes $s_{MRS}^q(\alpha^a,\beta^a)$ and $s_{diff}^q(\alpha^a,\beta^a)$ that holds across types and all allocation rules $x_q$ implemented with only ordeals. Recall that $v,z$ as well as their partials and cross-partials are continuous. Moreover, these partials and cross-partials are non-zero everywhere. Thus, the absolute values of $v,v_{\beta x}$ and $z_q,z_{\alpha q}$ can be uniformly bounded away from zero and infinity on the whole type space $\Theta$ and for all $x\in[0,1]$, $q\in [0,\overbar q]$. Analogous bounds hold for the following terms whenever $x^b \neq x^b_+$ (which implies $q^b \neq q^b_+$): 
\[
\frac{v(\beta^b,x_+^b) - v(\beta^b,x^b)}{x^b_+-x^b},
\ \ \frac{z(\alpha^b,q_+^b) - z(\alpha^b,q^b)}{q^b_+-q^b},
\ \  \frac{v_\beta(\beta^b,x_+^b) - v_{\beta}(\beta^b,x^b)}{x^b_+-x^b},
\ \  \frac{z_\alpha(\alpha^b,q_+^b) - z_\alpha(\alpha^b,q^b)}{q^b_+-q^b}.
\]
Therefore, by \eqref{eq:slopeq1} and \eqref{eq:slopeq2}, there exists $M <0$ such that for any $(\alpha,\beta)\in\Theta$ and $x_q$: 
\[
s_{MRS}^q(\alpha,\beta), \ s_{diff}^q(\alpha,\beta) < M.
\]
That is, all iso-$MRS$ and iso-difference curves for all allocation rules $x_q$ implemented with only ordeals are steeper than $M$. Now, assume that all iso-merit curves are flatter than $M$:
\[
\text{for all }(\alpha,\beta)\in \Theta, \quad -\frac{\eta_\alpha(\alpha,\beta)}{\eta_\beta(\alpha,\beta)} > M.
\]
I will construct an allocation rule implemented with only ordeals whose equity violation is below the bound in \eqref{eq:pbound}. Fix any $(\alpha^b,\beta^b)\in \Theta$. Then, for $x^b>0$ small enough, $q^b$ defined by
\[
        v(\beta^b,x^b) = z(\alpha^b,q^b),
\]
is sufficiently close to $0$ that $q^b \in [0,\overbar q]$. The following allocation then satisfies \eqref{eq:IR} and \eqref{eq:IC}: $(\alpha^b,\beta^b)$ and all types weakly above the iso-difference curve at $(\alpha^b,\beta^b)$ take $(x^b,q^b)$; all types strictly below it take $(0,0)$. 

Take any $(\alpha^c,\beta^c)$ strictly below or strictly above the iso-difference curve at $(\alpha^b,\beta^b)$. There is a neighborhood around $(\alpha^c,\beta^c)$ in which the allocation is constant, and so it is also constant along the slope of the iso-merit curve at  $(\alpha^c,\beta^c)$, giving a local equity violation $l(\alpha^c,\beta^c)=0$ there. Now, take any $(\alpha^c,\beta^c)$ on the iso-differencecurve at $(\alpha^b,\beta^b)$. The allocation is constant along this curve, so $s_{diff}^q(\alpha^c,\beta^c)$ is the slope of some direction in $D(\alpha^c,\beta^c)$. Since the iso-merit curve is flatter than $s_{diff}(\alpha^c,\beta^c)$ for any such $(\alpha^c,\beta^c)$, we get:
\[
L(x_q) \leq \Big|\angle(s_{diff}^q(\alpha^c,\beta^c)) - \angle\left(-\frac{\eta_\alpha(\alpha^c,\beta^c)}{\eta_\beta(\alpha^c,\beta^c)}\right) \Big| <
\Big| \frac{\pi}{2} - \angle\left(-\frac{\eta_\alpha(\alpha^c,\beta^c)}{\eta_\beta(\alpha^c,\beta^c)}\right) \Big|
<L(x_p). 
\]


\end{document}